\begin{document}

\begin{center}
  {\huge Differences-in-Neighbors for Network Interference in Experiments} \\
  \vspace{.5cm} {\Large Tianyi Peng ~~~ Naimeng Ye ~~~ Andrew Zheng}\\
  \vspace{.2cm}
  {\large Columbia University ~~~ Columbia University ~~ University of British Columbia}\\
  \vspace{.2cm}
  \texttt{tianyi.peng@columbia.edu, naimeng.ye@columbia.edu, andrew.zheng@sauder.ubc.ca}
  
\end{center}

\begin{abstract}
Experiments in online platforms frequently suffer from network interference, in which a treatment applied to a given unit affects outcomes for other units connected via the platform. This SUTVA violation biases naive approaches to experiment design and estimation. A common solution is to reduce interference by clustering connected units, and randomizing treatments at the cluster level, typically followed by estimation using one of two extremes: either a simple difference-in-means (DM) estimator, which ignores remaining interference; or an unbiased Horvitz-Thompson (HT) estimator, which eliminates interference at great cost in variance. Even combined with clustered designs, this presents a limited set of achievable bias variance tradeoffs. We propose a new estimator, dubbed Differences-in-Neighbors (DN), designed explicitly to mitigate network interference. Compared to DM estimators, DN achieves bias second order in the magnitude of the interference effect, while its variance is exponentially smaller than that of HT estimators. When combined with clustered designs, DN offers improved bias-variance tradeoffs not achievable by existing approaches. Empirical evaluations on a large-scale social network and a city-level ride-sharing simulator demonstrate the superior performance of DN in experiments at practical scale.
\end{abstract}


\section{Introduction}

Experimentation is a ubiquitous learning method in online platforms, where the experimenter's goal is commonly to estimate a global {\it average treatment effect} (ATE): i.e., the impact of applying a given intervention to an entire population of experimental ``units'' (``global treatment''), as compared with the absence of that intervention for the entire population (``global control''). In typical A/B testing scenarios, one infers the ATE by randomly selecting experimental units to receive the intervention, and those to use as a control group, and naively differencing the outcomes in each group.

In complex systems with many interacting participants, however, such an approach suffers from {\it interference}: a violation of the Stable Unit Treatment Value Assumption (SUTVA) which arises when treatments applied to one unit impact outcomes at other units. Common examples include experiments in social networks, where connected users impact each other's behavior via communication; or, in marketplaces, where buyers impact each other by consuming available inventory. The pattern of interference between units can be generically modeled as a directed graph, in which experimental units are nodes, and an edge ($u, v$) exists if the treatment at $u$ has the potential to impact the outcome at $v$.

The most common approach to mitigating the bias introduced by interference is the {\it design} of clustered experiments, in which units that interfere with each other are randomized together, so as to better approximate conditions under global control or global treatment. Examples including clustering based on geography, time (as in a switchback experiment), or otherwise known network structure (as in social network experiments). The resulting clusters are usually assumed to be independent, despite potential interference occuring at cluster boundaries, and estimation proceeds as in a typical A/B test. However, while aggregating units into clusters can reduce bias, doing so incurs a cost in increased variance. Larger clusters imply fewer independent samples -- at one extreme, treating the entire population together as a single cluster enables unbiased estimation of the ATE, but error due to variance is on the order of the ATE itself. Ultimately, the feasible set of cluster designs is dictated by the tradeoff between bias and variance induced by the choice of cluster size.

Faced with these limits to clustered experiment {\it design}, a promising route to expanding the range of bias and variance trade-offs possible is through improved {\it estimation}. Beyond naive estimators, which simply ignore interference, existing work is focused on {\it unbiased} estimators, which (on average) totally eliminate the effects of interference. The two classes of approaches here are importance sampling (Horvitz-Thompson) estimators, which eliminate bias at a large cost in variance; and regression-type approaches, which apply for known, parametric outcome models, but which have no guarantees under model misspecification. Unfortunately, the former has impractically high variance; the latter requires unrealistic assumptions; and as a result both are difficult to apply in practice. The question of how to construct estimators which are simultaneously low bias, low variance, and applicable under general outcome models, remains open.

\subsection{Contributions}

Motivated by this gap, we introduce a novel estimator for the ATE under general network interference, which we dub Differences-In-Neighbors (DN). In short, for a large class of problems, the DN estimator has provably small bias relative to naive estimation, while simultaneously achieving variance exponentially smaller than unbiased Horvitz-Thompson (HT) estimators. Applied in combination with clustered designs, DN de-biases remaining interference between clusters, which enables the use of substantially smaller clusters and improved bias-variance tradeoffs as compared with naive estimation.

Below, we describe these contributions in greater detail:

{\bf 1. Second-order Bias}:  For a general class of outcome functions in which direct effects of interference dominate and are $O(\delta)$, the DN estimator achieves bias that is $O(\delta^{2})$; i.e., second order in the magnitude of interference. In general, the bias of DN can be bounded by a natural notion of ``smoothness'' of the outcome function. We derive the DN estimator as well as its bias bound from a Taylor expansion of the treatment effect, which also immediately yields a series of high-order DN estimators with different bias-variance trade-offs.


{\bf 2. Variance}: Compared with unbiased HT estimators, which have variance scaling exponentially in the maximum degree of nodes in the network, the variance of DN scales only polynomially with this degree. While this entails a substantial increase in variance over the naive difference-in-means (DM) estimator, we show that at practical scale the bias-variance trade-off afforded by DN results in estimators with substantially lower error than either alternative.

{\bf 3. Synergies with clustering} We also introduce a version of the DN estimator tailored to clustered experimental designs. We show, both theoretically and empirically, that this estimator achieves trade-offs which are impossible to attain simply through improved clustering designs.

{\bf 4. Practical performance} We demonstrate the effectiveness of the DN estimator on a range of large-scale network interference problems, including experiments in social network graphs and also as applied to a realistic ride-sharing simulator with spatio-temporal clustering. 


\subsection{Related literature}

{\bf Clustered experiment design.} There is a long line of work on clustered experiment designs combined with either DM or HT estimation. \cite{uganderGraphClusterRandomization2013} in particular shows that graphs whose scaling follows a so-called $\kappa$-restricted growth condition admit a clustering, which, when combined with HT estimation, enables estimation of the ATE at an $O(\frac{1}{\sqrt{N}})$ rate where $N$ is the number of nodes. This rate, however, scales exponentially in $\kappa$. To mitigate this dependence, \cite{UganderYin+2023} proposes randomizing the clustering itself, reducing the dependence to a polynomial scale. This comes at the cost of increased computational complexity due to the need for more intricate probability calculations in the HT estimator. \cite{eckles2017design} undertakes an empirical exploration of the interaction between cluster design and the choice between Naive and a variant of HT estimation. 
\cite{viviano2023causal} addresses the tradeoff we discuss most directly, choosing clusters to optimize the bias-variance tradeoff under a worst-case outcome model, assuming a simple DM estimator. Compared to the DM or HT estimators used in clustered experiment designs, the DN estimator offers a favorable bias-variance tradeoff curve, as demonstrated in  \cref{sec:small-world} and \cref{sec:experiments}.


\textbf{Outcome modeling.} Another line of work assumes a structured model of interference and designs regression-based approaches to mitigate the bias. For example, \cite{Leung2021RateoptimalCD, leung2024clusterrandomizedtrialscrossclusterinterference} assume a spatial interference model and propose methods based on radius truncation. In \cite{gui2015network}, a linear model is assumed for the interference effect, depending on the treated fraction of neighbors, and regression is used to estimate or remove the interference. \cite{Aronow_2017} introduces the more general concept of an "exposure mapping": a function that maps a vector \( z \in \{0, 1\}^{N} \) to an outcome. In full generality, such a mapping can describe any dependence between experimental units, but various specific exposure mappings are described for which estimation is more tractable.

In contrast to this body of literature, our DN estimator does not rely on a structural model assumption for the outcome function. Instead, its bias depends on the ``smoothness'' of the outcome function, which quantifies the extent to which outcomes under random treatment assignments provide information about outcomes under global treatment or global control. We will make this intuition precise in \cref{sec:prob-formulation}.

A recent line of work \cite{Cortez_Rodriguez_2023, eichhornLoworderOutcomesClustered2024} introduces a "low-order" outcome function for interference and proposes unbiased regression-based estimators, referred to as pseudo-inverse estimators. Interestingly, under a related outcome model, which is low-order, but where the model coefficients can depend on the affected node's treatment, DN coincides with the corresponding pseudo-inverse estimator. The analysis of DN, however, takes a very different approach/motivation: it holds regardless of whether the outcome function has this low-order form, and the derivation is based on a Taylor expansion for deriving a second-order (or higher-order) bias. This complements this line of work by providing an analysis of bias under misspecification, and may be of independent interest.



\textbf{Industrial practice for addressing interference.} Tackling network interference on online platforms is a long standing problem. \cite{xu2015infrastructure} provides an overview of LinkedIn's experimentation platform, explicitly identifying interference as a key challenge, and a well-known LinkedIn study \cite{gui2015network} proposes both clustering and estimation approaches to mitigate network interference.  Similarly, \cite{karrer2021network} describes Facebook’s network clustering experiments, highlighting practical considerations in large-scale deployment. Beyond social networks, \cite{Holtz2020LimitingBF} examines interference in large-scale online marketplaces, using simulations to evaluate experimental methods in settings where interference is difficult to specify precisely. The persisting challenges of conducting clustered experiments in practice is the motivation of this work. 

{\bf Differences-in-Qs estimation.} Finally, our methodology is inspired by recent work addressing a form of "Markovian" temporal interference commonly encountered in dynamical systems, utilizing Difference-in-Qs (DQ) estimators \cite{fariasMarkovianInterferenceExperiments2022,fariasCorrectingInterferenceExperiments2023}. The development of the DN estimator arose from the following insight: \textit{that Q-values in the MDP system can be interpreted as the sum of all (outconnected) neighbors in a general graph}. This insight suggests that the underlying principle of DQ estimators can be extended beyond their original scope, applying not only to temporal graphs but also to spatio-temporal systems and more general interference networks. We will formalize this connection in \cref{sec:outcome-model}.

\begin{figure}
    \centering
\includegraphics[width=\linewidth]{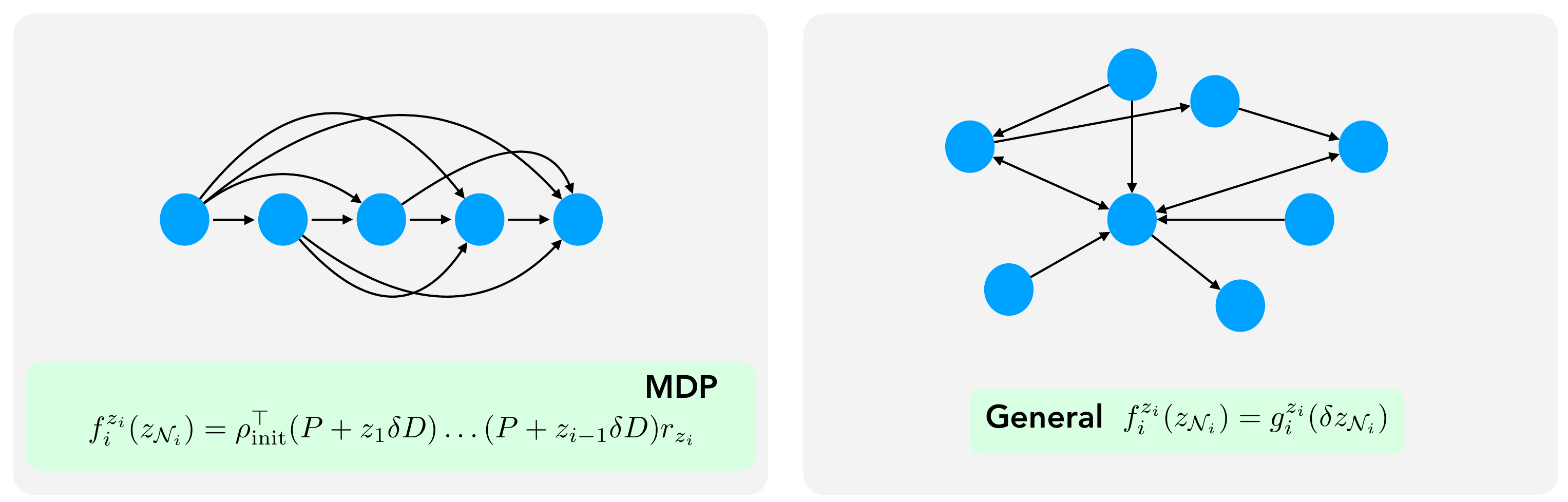}
    \caption{(Left) Temporal interference, where interference is captured through a Markov chain, is often considered a distinct class of problems from (Right) network interference, where interference is captured by interactions between nodes. On view of our work is a generalization of the Difference-in-Q estimators \cite{fariasMarkovianInterferenceExperiments2022,fariasCorrectingInterferenceExperiments2023}, which were proposed to address temporal interference in Markov chains, to network interference settings. This is achieved by observing that the Q-value is essentially the sum of (out-connected) neighbors in a general graph. This generalization enables a unifying view of temporal and network interference, allowing techniques developed for one domain to potentially benefit the other. The connection is formalized in \cref{sec:outcome-model}.}
    \label{fig:markov}
\end{figure}


Beyond DQ estimation, there is a growing body of work that aims to quantify and reduce bias in marketplace interference either through design or estimation such as \cite{Li_twosided_market,bajariMultipleRandomizationDesigns2021,johariExperimentalDesignTwoSided2020,weng2024experimental,qu2021efficient,xiong2024data,xiong2024optimal,huang2023estimating,ni2023design,bojinov2023design,cortez2024combining,jia2023clustered,shirani2024causal,zhao2024simple} and many more; our work contributes to this trend.

\section{Problem Formulation}
\label{sec:prob-formulation}
\subsection{Experimentation under network interference}
We consider the problem of experimenting on a population of $N$ units under potential network interference. We model the interference between units in this population as a undirected graph $G = ([N], E)$, where we have that an edge $(i, j)$ exists in $E$ if a treatment applied to unit $i$ can impact outcomes for unit $j$. As such, we will be interested in the neighbors of any given node, where neighbors of $i$ are denoted as $\mathcal{N}_{i}  =\{ j \in [N]: (j, i) \in E \}.$ 
We will find it useful to treat direct treatment effects separately from interference effects, and as such we will assume no self loops: i.e., $i \notin \mathcal{N}_{i}$. Note also that all the estimators and results in this paper generalize immediately to directed interference graphs; however, for simplicity of exposition, we will restrict our attention here to undirected graphs. Several bounds we discuss will depend on the maximum degree of the graph, which we denote as $d = \max_{i} |\mathcal{N}_i|.$

When running an experiment, we assign a treatment $z_{i} \in \{0, 1\}$ to each unit $i$. We denote the vector of treatment assignments across all nodes simply as $z \in \{0, 1\}^{N}$. Define for each node a potential outcomes function $f_{i} : \{0, 1\}^{N} \mapsto \mathbb{R}$, where in full generality the outcomes for node $i$ may depend on the entire vector of treatments. Throughout this work, we will specialize this slightly via an assumption of {\it neighborhood interference}, in which the outcomes at any node $i$ depend only on its own treatment $z_{i}$, and the treatments of its neighbors $\{z_{j}: j \in \mathcal{N}_{i}\}$:

\begin{assumption}[Neighborhood Interference] 
\label{ass:neighborhood}
For all $i \in [N]$, and treatment vectors $z, z'$, if $z_{j} = z'_{j}$ for all $j \in \{i\} \cup \mathcal{N}_{i}$, then $f_{i}(z) = f_{i}(z')$.
\end{assumption}

 We can now state our problem. An experimenter designs a treatment assignment policy $\pi$, which defines a distribution over treatment vectors. The experimenter draws a random treatment vector $z \sim \pi$, and subsequently observes the outcomes $Y_1 \ldots Y_{N}$ where $Y_i = f_{i}(z)$. Based on these observations, the experimenter's goal is to estimate the {\it average treatment effect} (ATE), i.e., the difference in outcomes between global treatment and global control, averaged over all units. This is defined as:

  \begin{equation*}
    {\rm ATE} = \frac{1}{N} \sum_{i=1}^{N} (f_{i}({\vec 1}) - f_{i}({\vec 0}))
  \end{equation*}
  where {$\vec 1$}, {$\vec 0$} respectively denote the vectors of all ones and all zeros.

\subsection{Outcome model}
\label{sec:outcome-model}


In this work, we will study the bias and variance of various estimators under network interference. As such, it will be useful to parameterize our outcome functions to allow a concise description of the {\it strength} of the interference. Here we describe a generic class of outcome functions which admit such a description.

First, in many real-world experiments, the outcome $Y_i$ at node $i$ can be expected to depend most heavily on the direct treatment $z_i$, which can both induce a direct treatment effect as well as mediate the effect of interference. To make this dependence clear, without loss of generality, we define the conditional outcome functions $f^0(z_{\mathcal{N}_i})$ and $f^1(z_{\mathcal{N}_i})$ such that 

\begin{equation}
\label{eqn:outcome}
  f_{i}(z) = \begin{cases}
  f_{i}^{0}( z_{\mathcal{N}_{i}}) &\text{~if~}  z_{i} = 0 \\ 
  f_{i}^{1}( z_{\mathcal{N}_{i}})&\text{~if~} z_{i} =1 \end{cases}
\end{equation}


Second, we will assume that each conditional outcome function $f_i^{z_i}$ can be re-parameterized as $f_i^{z_i}(z_{\mathcal{N}_i}) = g^{z_i}_{i}(\delta z_{\mathcal{N}_i})$, for some function $g^{z_i}_i : [0, 1]^n\mapsto \mathbb{R}$, and $\delta > 0$. Here $\delta$ is a parameter that scales the magnitude of the interference that any individual neighbor has on the outcome $Y_{i}$. Again, this suffers no loss of generality. However, in our subsequent analysis, we will be interested in how the bias of various estimators scales in terms of the interference strength $\delta$. As such, we will be most interested in outcome functions for which scaling $\delta$ has a natural interpretation; we provide several examples at the end of this section.



Without further assumptions on the potential outcomes $f_{i}$, each of the $2^{\mathcal|{\mN}_{i}|}$ possible treatment configurations can still produce arbitrary outcomes, making it impossible to draw useful conclusions about the treatment effect $Y_{i}({\vec 1}) - Y_{i}({\vec 0})$ at node $i$ without actually observing outcomes under $z_{\mathcal{N}_{i}} = {\vec 1}$ or $z_{\mathcal{N}_{i}}= {\vec 0}$. To enable more practical inference, we will need to make an assumption on the model: we will restrict our attention to outcome models where the individual outcome functions $f^{1}_{i}$ and $f^{0}_{i}$ vary {\it smoothly} in the strength of the interference. Precisely

\begin{assumption}[Smooth interference]
  \label{ass:smooth}
  For any node $i$, and neighbors $j, k \in \mathcal{N}_{i}$ where $j\neq k$, and for any treatment $z_{i} \in \{0, 1\}$, the second-order derivatives of $g_{i}^{z_{i}}(x)$ at any $x \in [0, \delta]^{N}$ are bounded as
  \begin{equation*}
    \left\vert \frac{\partial g_{i}^{z_{i}}}{\partial x_{j} \partial x_{k}} \right\vert \leq L.
  \end{equation*}
  \end{assumption}


  Intuitively, by preventing the outcome function from varying too abruptly in the treatments, this assumption ensures that any realization of the treatment vector $z$ provides a certain amount of information regarding outcomes under global treatment or global control.\footnote{A natural question is how assumptions on first-order or higher-order derivatives instead would affect our analysis. In fact they lead to different variants of DN with different order of estimation errors.  See Section~\ref{sec:proof-bias-main} for a  brief discussion for such generalizations.}  As it turns out, a wide range of outcome models in the literature do fall naturally into this framework, and come equipped with natural bounds on $L$:

  
\paragraph{Linear outcome models} This includes an array of common models consisting of additive effects from each treated neighbor, such as models where the outcome $Y_{i}$ depends on the number or the proportion of treated neighbors, for example, $f^{z_{i}}(z_{\mathcal{N}_{i}}) = \alpha + \beta z_{i} + \delta \gamma 1^\top z_{\mathcal{N}_{i}}$ or more generally
\begin{align*}
f^{z_{i}}(z_{\mathcal{N}_{i}}) = \alpha_i + \sum_{j \in \mN_i} g_{ij}(z_i, z_j).
\end{align*} for some arbitrary function $g_{ij}$. Such models clearly have no second-order terms, and therefore satisfy \cref{ass:smooth} with $L=0$.

\paragraph{Multiplicative outcome models} Consider the multiplicative effects model 
$$f^{z_{i}}(z_{\mathcal{N}_i}) = c_{0} \prod_{j \in \mathcal{N}_{i}} \left(1 + \frac{\delta}{|\mathcal{N}_i|} c_{ij} z_{j}\right).$$
Here, we have $L$ bounded as $O(\delta^2)$. 

\paragraph{Low-order outcome models} \cite{Cortez_Rodriguez_2023} proposes a general class of outcome functions, where interference effects depend on interactions between up to $\beta$ of a node's neighbors. Such models can be expressed as degree-$\beta$ polynomial models of the form $f^{z_{i}}( z_{\mathcal{N}_{i}}) = c_{0} + \sum_{k=1}^{\beta} \sum_{S \in \mathcal{S}_{i}^{k}} c_{k, \mathcal{S}} \prod_{j \in S} \delta z_{j}$,
where $\mathcal{S}_{i}^{k} = \{ S \subseteq \mathcal{N}_{i} : |S| \leq k \}$ is the set of size $k$ subsets of neighbors of $i$. Under this general model, we have that $L$ is bounded by the coefficients corresponding to higher-order interactions between treatments; i.e., $L \leq \sum_{k = 2}^{\beta} \delta^{k - 2} \sum_{S \in \mathcal{S}^{k}_{i}} c_{k, S}$.

\paragraph{Markovian interference} Following \cite{fariasMarkovianInterferenceExperiments2022,fariasCorrectingInterferenceExperiments2023}, consider experimentation in a discrete-time dynamical system where nodes $i$ represent time steps, and treatments $z_{i}$ represent actions. The interference graph then contains an edge from every time $i$ to every subsequent time $j > i$ (see Figure~\ref{fig:markov}). The system has $S$ possible states, intially distributed according to $\rho_{\rm init} \in \mathbb{R}^{S}$. When $z_{i} = 0$, states evolve according to the transition matrix $P \in \mathbb{R}^{S \times S}$, whereas applying the treatment $z_{i}=1$ modifies the transition probabilities to $P + \delta D$, for some $D \in \mathbb{R}^{S \times S}$, appropriately normalized so that $P + \delta D$ remains a stochastic matrix. Finally, conditional on the state at time $i$, which may depend on prior treatments $z_{j}$ for $j < i$, outcomes (i.e., rewards) only depend on the current treatment $z_{i}$. As such there are two reward functions $r_{1}, r_{0} \in \mathbb{R}^{S}$. Putting this together, we have the outcome function for $f^{z_{i}}_{i}( z_{\mathcal{N}_{i}}) = \rho^{\top}_{\rm init} \left[\prod_{j < i} (P + z_{j}  \delta D)\right] r_{z_{i}}$. \cite{fariasMarkovianInterferenceExperiments2022,fariasCorrectingInterferenceExperiments2023} provide explicit bounds for $L$ in terms of the ``effective horizon'' of the Markov chain.

\section{The Differences-In-Neighbors Estimator}\label{sec:DN-unit}

We begin by introducing our proposed estimators for the ATE, and their analysis, under a simple Bernoulli randomization design where $z_{i} \overset{\rm iid}{\sim} {\rm Bernoulli}(p)$.

\paragraph{The Naive Estimator} The most common approach in practice is simply to ignore any interference, and proceed with estimation as one would under SUTVA. A typical, biased approach here is the naive Difference-in-Means (DM) estimator: \footnote{Empirically, ${\rm \tilde{ATE}}_{\rm DM} = \frac{1}{\sum_{i=1}^{N} z_i}\sum_{i=1}^{N} z_iY_{i} - \frac{1}{\sum_{i=1}^{N} 1-z_i}\sum_{i=1}^{N} (1-z_i) Y_{i}$ is often used. Here we present its alternative form for simplifying the analysis. $\rm \tilde{ATE}_{\rm DM}$ is used in our experiments for comparison.}

  \begin{equation}
    \label{eq:naive}
    {\rm ATE}_{\rm DM}
    = \frac{1}{N}\sum_{i=1}^{N} \left(\frac{z_i}{p} -  \frac{1-z_i}{1-p}\right) Y_i
  \end{equation}


\paragraph{The Differences-In-Neighbors Estimator} Intuitively, the bias of the DM estimator arises from the fact that it makes no effort to account for the network effects of applying a treatment $z_i$. In particular, \eqref{eq:naive} only ``attributes'' a node's own outcome $Y_{i}$ to its treatment. We construct the Differences-In-Neighbors (DN) estimator by making a simple and intuitive correction: when measuring the impact of a treatment $z_{i}$, one should also include the outcomes of the neighbors  of $i$ impacted by that treatment. More precisely, we propose the following estimator:

\begin{align}\label{eq:DN-p-probability}
    {\rm ATE}_{\rm DN} = \frac{1}{N}\sum_{i=1}^{N} \left(\frac{z_i}{p} -  \frac{1-z_i}{1-p}\right) \left(\sum_{j \in \mN_i}  Y_{j} \cdot \xi_{j} + Y_i\right)
\end{align}
where $\xi_{j} = \frac{z_j(1-p)}{p} + \frac{(1-z_j)p}{1-p}$ is a propensity score correction. When $p=1/2$ for all nodes, we have $\xi_{j}=1$ and this simplifies to an intuitive form:
\begin{align}\label{eq:DN-equal-probability}
    {\rm ATE}_{\rm DN} = \frac{1  }{N}\sum_{i=1}^{N} \left(\frac{z_i}{p} -  \frac{1-z_i}{1-p}\right) \left(\sum_{j \in \mN_i \cup \{i\}} Y_{j}\right)
\end{align}

That is, we use the difference between the aggregated outcomes of node $i$ and its neighbors; hence the name \textit{Differences-in-Neighbors}.

\paragraph{Credit Assignment.} Whereas equation \cref{eq:DN-p-probability} takes the perspective of the treatment $z_{i}$, and measures its impact on $i$'s neighbors, we can equivalently write the DN estimator as ``assigning credit'' for the outcome $Y_{i}$ to each of the treatments $\{ z_{j}: j \in \mathcal{N}_{i}\}$ that contributed to that outcome. This alternative view will be useful in the subsequent analysis:

\begin{align}\label{eq:DN-assignment-view}
    {\rm ATE}_{\rm DN} = \frac{1}{N} \sum_{i=1}^{N}  \left(\left(\frac{z_i}{p} - \frac{1-z_i}{1-p}\right) +\xi_{i}\sum_{j \in \mN_{i}}\left(\frac{z_j}{p} - \frac{1-z_j}{1-p}\right)\right)Y_{i}
\end{align}

\subsection{Summary of Guarantees}

The central contribution of this work is to show that the DN estimator achieves a favorable bias-variance trade-off, which is not possible under existing estimators. In this section, we state these claims with detailed explanations and proofs deferred.
\paragraph{The Horvitz-Thompson Estimator}
To begin, we need to introduce one more common estimator: the Horvitz-Thompson (HT) estimator, which eliminates interference bias via importance sampling:

\[
{\rm ATE_{\rm HT}} = \frac{1}{N}\sum_{i=1}^{N} \left(\prod_{j \in \mathcal{N}_i}\frac{z_{j}}{p} -  \prod_{j \in \mathcal{N}_i} \frac{1 - z_{j}}{1 - p} \right) Y_i.
\]

Table~\ref{tab:results-summary} summarizes bounds on bias and variance of each estimator. The naive DM and HT estimators represent opposite extremes of the bias-variance spectrum; whereas DN simultaneously achieves second-order bias and variance scaling only polynomially in the degree $d$. In our numerical experiments later, we will show that in practical regimes, this trade-off allows DN to achieve substantially lower RMSE than existing alternatives.

    \begin{table}[htbp]
      \centering
      \begin{tabular}{c|cc}
        Estimator & Bias & Variance \\
        \hline 
        DM & $\Theta(d \delta)$ & $\Theta(\frac{1}{N})$\\
        DN & $\Theta(d^2\delta^{2})$ & $O(\frac{d^{4}}{N})$ \\
        HT & 0 & $\Theta(\frac{2^{d}}{N})$
      \end{tabular}
      \caption{Summary of bias and variance results for DM, DN, and HT estimators. DN simultaneously achieves bias second order in the strength of interference, while paying a variance cost polynomial in the degree $d$.}
      \label{tab:results-summary}
    \end{table}

\subsection{Bias}\label{sec:proof-bias-main}

In this section, we bound the bias of the DN estimator in estimating the ATE. We also present the proof of this bound, which provides a more precise intuition for the derivation of the DN estimator as a first-order Taylor approximation to the ATE. 

We first state a general version of the bound which applies to {\it any} outcome function, not just those posited in \cref{sec:outcome-model}; we will specialize it to our outcome model in the sequel.

We will state this bound in terms of a {\it discrete} notion of the ``smoothness'' of the outcome function, defined as follows. First, for any $z \in \{0,1\}^{n}$, for any $j, k \in[n]$, let $z^{(z_j=a,z_k=b)}$ denote $z$ with its $j^{\rm th}$ and $k^{\rm th}$ elements replaced by $a, b$ respectively. Then, for a function $h : \{0, 1\}^{n} \mapsto \mathbb{R}$, we define the second-order finite differences as

\begin{equation*}
\Delta_{jk}h(z) = h(z^{(z_{j}=1, z_{k}=1)})
- h(z^{(z_{j}=0, z_{k}=1)})
- h(z^{(z_{j}=1, z_{k}=0)})
+ h(z^{(z_{j}=0, z_{k}=0)})
\end{equation*}

\textit{Finally, we say that $h$ is $\epsilon$-smooth if for any $z\in\{0,1\}^{N}$ and $j, k \in [N]$}, 
\begin{align*}
    |\Delta_{jk} h(z) | \leq \epsilon.
\end{align*}

We can now state our general bound:

\begin{theorem}
    \label{th:general-bias}
    Suppose that $f^{1}_{i}$ and $f^{0}_{i}$ are $\epsilon$-smooth for all nodes $i \in [N]$. Then,
    \begin{align*}
        | {\rm ATE} - {\sf{E}}[ {\rm ATE}_{\rm DN}]| \leq d^{2} \epsilon
    \end{align*}
\end{theorem}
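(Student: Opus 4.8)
The plan is to evaluate $\mathbb{E}[\mathrm{ATE}_{\mathrm{DN}}]$ in closed form, subtract it from the ATE, and show that all first-order terms cancel so that the residual is a bounded sum of second-order finite differences. Write $\phi(z_i) = \frac{z_i}{p} - \frac{1-z_i}{1-p}$ and work from the credit-assignment form \eqref{eq:DN-assignment-view}, so that $\mathbb{E}[\mathrm{ATE}_{\mathrm{DN}}] = \frac{1}{N}\sum_i \mathbb{E}\big[(\phi(z_i) + \xi_i\sum_{j\in\mathcal{N}_i}\phi(z_j))\,f_i^{z_i}(z_{\mathcal{N}_i})\big]$. Two elementary identities, obtained by evaluating at $z_i\in\{0,1\}$, drive the whole computation: $\phi$ acts as an unbiased discrete derivative, $\mathbb{E}[\phi(z_i)\,a(z_i)] = a(1)-a(0)$, while the propensity correction reweights as $\mathbb{E}[\xi_i\,a(z_i)] = (1-p)\,a(1) + p\,a(0)$, for any function $a$.

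Next I would fix a node $i$ and, abbreviating $g = f_i^1$ and $h = f_i^0$, compute its contribution using the independence of the coordinates $\{z_j : j\in\{i\}\cup\mathcal{N}_i\}$ together with \cref{ass:neighborhood}. Applying the first identity in the $z_i$ coordinate gives $\mathbb{E}[\phi(z_i)f_i^{z_i}] = \mathbb{E}_{z_{\mathcal{N}_i}}[g - h]$; applying it in each neighbor coordinate $z_j$ turns $\phi(z_j)$ into the discrete derivative $D_j$ (so $\mathbb{E}_{z_j}[\phi(z_j) f_i^{z_i}] = D_j f_i^{z_i}$, which no longer depends on $z_j$), and the second identity then supplies the weights, giving $\mathbb{E}[\xi_i\sum_j\phi(z_j)f_i^{z_i}] = (1-p)\sum_j \mathbb{E}[D_j g] + p\sum_j\mathbb{E}[D_j h]$. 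Thus node $i$ contributes $\mathbb{E}_{z_{\mathcal{N}_i}}[g-h] + (1-p)\sum_j\mathbb{E}[D_j g] + p\sum_j\mathbb{E}[D_j h]$, whereas its true ATE contribution is $g(\vec{1}) - h(\vec{0})$.

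The heart of the argument is to show that the node-wise error $g(\vec{1}) - h(\vec{0}) - \big(\mathbb{E}[g-h] + (1-p)\sum_j\mathbb{E}[D_j g] + p\sum_j\mathbb{E}[D_j h]\big)$ is a sum of at most $d(d-1)$ second-order differences. I would treat the $g$ and $h$ halves separately. For $g$, write $g(\vec{1}) - \mathbb{E}[g] = \mathbb{E}_z[g(\vec{1}) - g(z)]$ and telescope from $z$ up to $\vec{1}$ by flipping its zero coordinates to one, one at a time; the leading terms are $\sum_{j:z_j=0} D_j g(z)$, and replacing each interpolated evaluation point by the base point $z$ introduces exactly a sum of $\Delta_{jk}g$ terms, at most $\binom{d}{2}$ of them, each bounded by $\epsilon$. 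Since $D_j g$ does not depend on $z_j$, independence gives $\mathbb{E}_z[\sum_{j:z_j=0} D_j g(z)] = (1-p)\sum_j \mathbb{E}[D_j g]$, so the $g$-half of the error is at most $\binom{d}{2}\epsilon$ in magnitude. The symmetric telescoping from $\vec{0}$ up to $z$ handles the $h$-half (the factor $p$ appearing because one now weights by $z_j$ rather than $1-z_j$), again at most $\binom{d}{2}\epsilon$. Summing bounds each node's error by $2\binom{d}{2}\epsilon = d(d-1)\epsilon \le d^2\epsilon$, and averaging over nodes gives the claim.

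The main obstacle is the bookkeeping in this last step: verifying that replacing interpolated arguments by the base point $z$ produces precisely a telescoping sum of the second-order differences $\Delta_{jk}$ from the statement, and counting these terms to obtain the $d^2$ factor rather than something larger. It is worth emphasizing that the exact cancellation of the constant and first-order terms is what realizes DN as an unbiased estimator of the first-order Taylor expansion of the ATE, and that the reweighting property of $\xi$ is exactly what is needed to match these first-order terms for general $p$ (when $p=1/2$ one has $\xi\equiv 1$ and the two weightings coincide automatically).
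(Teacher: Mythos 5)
Your proof is correct, and its overall architecture matches the paper's: both arguments first show that ${\sf E}[{\rm ATE}_{\rm DN}]$ equals, node by node, a zeroth-order term plus discrete-derivative terms weighted by $(1-p)$ and $p$ --- your identities ${\sf E}[\phi(z_i)a(z_i)]=a(1)-a(0)$ and ${\sf E}[\xi_i a(z_i)]=(1-p)a(1)+pa(0)$ are exactly the propensity-weighting facts the paper uses, just run in the reverse direction --- and both then attribute the entire bias to second-order differences. Where you genuinely diverge is in bounding that residual. The paper extends the expected outcome to a smooth function $F^{a}(w)$ of the treatment probabilities $w\in[0,1]^{|\mathcal{N}_i|}$, identifies the Hessian entries as ${\sf E}[\Delta_{jk}f^{a}(Z)]$, and invokes the Lagrange form of the multivariate Taylor remainder at an intermediate point $p'$; your argument never leaves the discrete hypercube, telescoping from $z$ up to $\vec{1}$ (resp.\ from $\vec{0}$ up to $z$) one coordinate at a time and writing each discrepancy $D_{j_t}g(z^{(t-1)})-D_{j_t}g(z)$ as an exact sum of at most $t-1$ second-order differences $\Delta_{j_t j_s}g$, for a count of $\binom{d}{2}$ per half and $2\binom{d}{2}=d(d-1)\le d^2$ per node --- this bookkeeping, which you flag as the main obstacle, does check out. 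What each route buys: yours is more elementary and self-contained, requiring no continuous interpolation and no mean-value theorem, and it sidesteps a small subtlety in the paper --- the displayed remainder there omits the $\tfrac{1}{2}$ of the Lagrange form, which is actually needed for the two halves ($a=1$ and $a=0$) to sum to $d^2\epsilon$ rather than $2d(d-1)\epsilon$; your discrete count gives the slightly sharper $d(d-1)\epsilon$ directly. The paper's continuous route, by contrast, makes the structural picture explicit --- DM as the zeroth-order and DN as the first-order Taylor approximation of $F^{1}(1)-F^{0}(0)$ --- which is what lets the authors read off higher-order DN estimators and their bias bounds immediately; to recover those extensions your telescoping would have to be redone with third- and higher-order finite differences along longer paths.
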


For outcome models of the form specified in  \cref{sec:outcome-model}, we can relate this bound directly to the magnitude of interference $\delta$ via the following Corollary:

\begin{corollary}
    \label{dn-bias}
    For an outcome model satisfying \cref{ass:smooth}, ${\rm ATE}_{\rm DN}$ has bias bounded as 
    \begin{align*}
        | {\rm ATE} - {\sf{E}}[ {\rm ATE}_{\rm DN}]| \leq d^2\delta^2 L.
    \end{align*}
\end{corollary}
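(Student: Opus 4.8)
The plan is to derive the Corollary directly from \cref{th:general-bias}: it suffices to show that \cref{ass:smooth} implies each conditional outcome function $f_i^{z_i}$ is $\epsilon$-smooth in the discrete sense of \cref{th:general-bias}, with $\epsilon = \delta^2 L$. Once this is established, substituting $\epsilon = \delta^2 L$ into the Theorem yields the bound $d^2 \delta^2 L$ immediately, so the entire content of the Corollary reduces to a local estimate relating the discrete second-order finite difference to the continuous mixed partial derivative bounded in \cref{ass:smooth}.

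To establish smoothness, I would fix a node $i$, a treatment $z_i \in \{0,1\}$, and distinct neighbors $j, k \in \mathcal{N}_i$, and pass from the discrete finite difference to a continuous one via the reparameterization $f_i^{z_i}(z_{\mathcal{N}_i}) = g_i^{z_i}(\delta z_{\mathcal{N}_i})$. Under this change of variables, toggling $z_j$ and $z_k$ between $0$ and $1$ corresponds to moving the arguments $x_j = \delta z_j$ and $x_k = \delta z_k$ of $g_i^{z_i}$ between $0$ and $\delta$, with all remaining coordinates held fixed at $x_\ell = \delta z_\ell$. Hence $\Delta_{jk} f_i^{z_i}(z)$ equals the mixed second difference of $g_i^{z_i}$ over the rectangle $[0,\delta]\times[0,\delta]$ in the $(x_j, x_k)$ coordinates.

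The key step is then the standard identity (the fundamental theorem of calculus applied once in each of the two coordinates) expressing this mixed second difference as a double integral of the mixed partial derivative:
\begin{equation*}
\Delta_{jk} f_i^{z_i}(z) = \int_0^\delta \int_0^\delta \frac{\partial^2 g_i^{z_i}}{\partial x_j \, \partial x_k}(x) \, dx_k \, dx_j,
\end{equation*}
where the untouched coordinates of $x$ are fixed at $\delta z_\ell$. Since the integration domain lies inside $[0,\delta]^N$, \cref{ass:smooth} bounds the integrand pointwise by $L$, and integrating over a region of area $\delta^2$ gives $|\Delta_{jk} f_i^{z_i}(z)| \leq \delta^2 L$. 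As this holds for every $i$, every $z$, every pair $j \neq k$, and both $z_i \in \{0,1\}$, both $f_i^0$ and $f_i^1$ are $\delta^2 L$-smooth.

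Finally, invoking \cref{th:general-bias} with $\epsilon = \delta^2 L$ yields $|{\rm ATE} - {\sf{E}}[{\rm ATE}_{\rm DN}]| \leq d^2 \delta^2 L$, which is the claimed bound. The only subtlety, and the single place requiring care, is the passage from the discrete hypercube difference to the continuous integral: this relies on $g_i^{z_i}$ being twice continuously differentiable (implicit in the hypothesis of \cref{ass:smooth}) and on correctly tracking that each unit step in $z$ becomes a step of length $\delta$ in $x$, which is precisely what produces the factor $\delta^2$. Everything else is bookkeeping and a direct appeal to the already-proven general bound.
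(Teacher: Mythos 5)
Your proposal is correct and follows essentially the same route as the paper: both arguments reduce the corollary to showing that $\Delta_{jk} f_i^a$ is bounded by $\delta^2 L$, via the fundamental theorem of calculus applied in the $z_j$ and $z_k$ coordinates together with the change of variables $x = \delta z$, and then invoke Theorem~\ref{th:general-bias} with $\epsilon = \delta^2 L$. Your added remark about the implicit twice-differentiability of $g_i^{z_i}$ is a fair observation but does not change the substance of the argument.
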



\begin{proof}
The statement follows by bounding the second-order finite differences using \cref{ass:smooth}. For any $j, k \in \mathcal{N}_i$, $a \in \{0, 1\}$, and $z \in \{0, 1\}^{|\mathcal{N}_i|}$, by the fundamental theorem of calculus: 

\begin{align}
\label{assum:second-order-finite-difference}
\Delta_{jk} f^a_i(z)  &= |f_i^a(z^{(z_j=1, z_{k}=1)})-f^a_i(z^{(z_j=1, z_{k}=0)})-f^a_i(z^{(z_j=0, z_{k}=1)}) +f^a_i(z^{(z_j=0, z_{k}=0)})| \nonumber\\
    &= \left|\int_{z_{j}=0}^{1}\int_{z_{k}=0}^{1} \frac{\partial f^a_{i}}{\partial z_{j} \partial z_{k}} d_{z_{j}}d_{z_{k}}\right|\nonumber\\
    &\overset{(i)}{=} \left|\int_{x_{j}=0}^{\delta}\int_{x_{k}=0}^{\delta} \frac{\partial g_{i}^a}{\partial x_{j} \partial x_{k}} d_{x_{j}}d_{x_{k}}\right|\nonumber\\
    &\leq \int_{x_{j}=0}^{\delta}\int_{x_{k}=0}^{\delta} \left|\frac{\partial g_{i}^{a}}{\partial x_{j} \partial x_{k}} \right| d_{x_{j}}d_{x_{k}}\nonumber\\
    &\overset{(ii)}{\leq} L\delta^2
\end{align}
where (i)  uses a change of variables based on the parameterization $f_i^a(z) = g(\delta z)$, and (ii) uses \cref{ass:smooth}.
\end{proof}

\subsection{Proof of Theorem~\ref{th:general-bias}}
  We begin by analyzing the expected value of ${\rm ATE}_{\rm DN}$. Rather than computing this quantity directly, it will be instructive to instead frame ${\rm ATE}_{\rm DN}$ as an unbiased estimator for an intuitive quantity: a first-order Taylor approximation to the ATE.
  
  Here, we analyze a single node $i$, and denote by $f(z)$ the outcomes for node $i$, omitting the subscript $i$ for simplicity. Let $Z \in \{0, 1\}^{|\mathcal{N}_{i}|}$ be a random vector of treatment assignments where $Z_{j} \overset{\rm iid}{\sim}. {\rm Bernoulli}(p)$. Let $p = \{ p : j \in \mathcal{N}_{i}\}$ denote the vector of treatment probabilities, and define the expected outcomes $F^{1}(w) = {\sf E}[f^{1}(Z) ]$ and $F^{0}(w) = {\sf E}[f^{0}(Z) ]$ where expectations are taken over the treatment assignments.

  Using this notation, we can write the contribution of node $i$ to the ATE simply as $F^{1}(1) - F^{0}(0)$. Under the experiment that we propose, a given realization of $Z$ can be used to construct an unbiased estimate of either $F^{1}(p)$ or $F^{0}(p)$; our goal will now be to use such an estimate to approximate $F^{1}(1)$ and $F^{0}(0)$ respectively. We begin by approximating $F^{1}(1)$, which has the following explicit expression in terms of possible realizations of $Z$:

\begin{equation*}
  F^{1}({w}) = \sum_{{z} \in \{0, 1\}^{|\mathcal{N}_{i}|}} {\sf Pr}(Z = {z}) f^{1}({z}) = \sum_{z \in \{0, 1\}^{|\mathcal{N}_{i}|}} \left(\prod_{j \in \mathcal{N}_{i}} (z_{j} w_{j}  + (1 - z_{j}) (1 - w_{j})) \right) f^{1}(z)
\end{equation*}

Then we can differentiate $F^{1}$ with respect to the treatment probabilities:

\begin{align*}
\frac{\partial F^{1}}{\partial w_{j}} &= \sum_{z \in \{0, 1\}^{|\mathcal{N}_{i}|}} (z_{j} - (1 - z_{j})) \left(\prod_{k \in \mathcal{N}_{i} \setminus \{j\}} (z_{k} w_{k}  + (1 - z_{k}) (1 - w_{k})) \right) f(z)\\
&= {\sf E}[f^{1}(Z) | Z_j = 1] - {\sf E}[f^{1}(Z) |  Z_j = 0]
\end{align*}

Now, we construct a first-order Taylor approximation for $F^1(1)$ around $F^1({p})$. This gives:
\begin{align*}
    F^1({p}) + (1-{p})^\top \nabla F^1(p)
    &= F^1({p}) + \sum_{j\in \mN_i} (1-p)\frac{\partial F^{1}}{\partial w_{j}} \Big\vert_{w=p}\\
    &= F^1({p}) + \sum_{j\in \mN_i} (1-p)\left({\sf E}[f^{1}(Z) |  Z_j = 1] - {\sf E}[f^{1}(Z) | Z_j = 0]\right)
\end{align*}

Similarly, we can approximate $F^{0}(0)$ as:
\begin{align*}
    F^0({p}) + (1-{p})^{\top} \nabla F^0(p) &= F^0({p}) + \sum_{j\in \mN_i} (0-p)\left({\sf E}[f^{0}(Z) | Z_j = 1] - \E[f^{0}(Z) | Z_j = 0]\right)
\end{align*}

Taken together, we have a first-order approximation of node $i$'s contribution to the ATE:

\begin{align*}
  F^{1}(1) - F^{0}(0) 
  &\approx
  F^1({p}) + \sum_{j\in \mN_i} (1-p)\left({\sf E}[f^{1}(Z) |  Z_j = 1] - {\sf E}[f^{1}(Z) | Z_j = 0]\right) \\
&\quad  - F^0({p}) - \sum_{j\in \mN_i} (0-p)\left({\sf E}[f^{0}(Z) | Z_j = 1] - {\sf E}[f^{0}(Z) | Z_j = 0]\right)
\end{align*}

It remains to construct an estimator for this idealized quantity. We will do so using propensity weights. Starting with the zeroth-order term, we note that

\begin{align*}
  F^{1}(p) &= {\sf E}[f^{1}(Z)] = {\sf E}\left[\frac{Z_{i}}{p} Y_{i}\right] &
  F^{0}(p) &= {\sf E}[f^{0}(Z)] = {\sf E}\left[\frac{1 - Z_{i}}{1-p} Y_{i}\right]
\end{align*}

Estimating each expectation with a single sample and taking the difference immediately yields the naive DM estimator; this demonstrates that {\it DM is a zeroth-order approximation to the ATE}.

The first-order terms can also be estimated via propensity weights; in particular, for $a, b \in \{0, 1\}$,

\begin{equation*}
{\sf E}[f^{a}(Z)| Z_{j} = b] = {\sf E}\left[\frac{\mathbf{1}\{Z_{i}= a\}}{{\sf Pr}(Z_{i} = a)}\frac{\mathbf{1}\{Z_{j}= b\}}{{\sf Pr}(Z_{j} = b)} Y_{i} \right]
\end{equation*}

Taking single-sample estimates of each expectation in the first-order approximation to the ATE, and rearranging these terms, then immediately yields the DN estimator. In summary: {\it the DN estimator is a unbiased estimate of a first order approximation to the ATE}.

To analyze the bias of the DN estimator, then, we need to bound the higher-order terms in the Taylor expansion.  For either treatment $a \in \{0, 1\}$ , we have


\begin{align*}
\frac{\partial^{2} F^{a}}{\partial w_{j} \partial w_{k}}
&= \sum_{z \in \{0, 1\}^{|\mathcal{N}_{i}|}} (z_{j} - (1 - z_{j})) (z_{k} - (1- z_{k})) \left(\prod_{l \in \mathcal{N}_i \setminus \{j,k\}} (z_{l} w_{l}  + (1 - z_{l}) (1 - w_{l}) \right) f(z) \\
&= {\sf E}\left[ f^{a}(Z) | Z_{j} = 1, Z_{k} = 1 \right] + {\sf E}\left[ f^{a}(Z) | Z_{j} = 0, Z_{k} = 0 \right] \\
&\quad - {\sf E}\left[ f^{a}(Z) | Z_{j} = 1, Z_{k} = 0 \right] - {\sf E}\left[ f^{a}(Z) | Z_{j} = 0, Z_{k} = 1 \right] \\
&= {\sf E}\left[ \Delta_{jk} f^a(Z)\right]
\end{align*}
for $j,k\in \mN_i$ and $j\neq k$, and $\frac{\partial^{2} F}{\partial w_{j}^{2}} = 0$ for $j = k$. 

Let $H^a(p)$ denote the matrix of second-order partial derivatives of $F^a$ at $p$. There exists some $p'$ such that the error of the DN estimator's first-order approximation to $F^a(a)$ is $(a - p)^\top H^a(p')(a - p)$. We conclude by bounding these terms, and the theorem follows immediately.
\begin{align*}
 \left\vert (a-{p})^\top H^a ({p}')(a-{p})\right\vert
 &= \sum_{j, k \in \mathcal{N}_i: j \neq k} (a - p)^2 {\sf E}[\Delta_{jk} f^a(Z)]  
 \leq d^2 \epsilon 
\end{align*}

    This also immediately shows how to derive higher-order versions of the DN estimator, along with their bias analyses, and also shows that for $n$ nodes the $n^{\rm th}$-order expansion is unbiased.

\subsection{Variance analysis}
We now turn to the variance of the DN estimator.
\begin{theorem}
    \label{dn-variance}
    \begin{align*}
        {\rm Var}({\rm ATE}_{\rm DN})\leq O\left(\frac{Y_{\max}^2}{N}\cdot \left(d^4+\frac{d^3}{p(1-p)}+ \frac{d}{p^2(1-p)^2}\right)\right)
    \end{align*}
\end{theorem}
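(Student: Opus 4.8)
The plan is to write $\mathrm{ATE}_{\mathrm{DN}}=\frac1N\sum_i W_iY_i$ via the credit-assignment form~\eqref{eq:DN-assignment-view}, where $W_i=D_i+\xi_i\sum_{j\in\mathcal{N}_i}D_j$ and $D_k:=\frac{z_k}{p}-\frac{1-z_k}{1-p}$. The building blocks $D_k$ are independent across $k$ and mean zero, with ${\sf E}[D_k^2]=\frac1{p(1-p)}$ and ${\sf E}[|D_k|]=2$, while the corrections satisfy ${\sf E}[\xi_k]=1$ and ${\sf E}[\xi_k^2]=\frac{(1-p)^2}{p}+\frac{p^2}{1-p}\le\frac1{p(1-p)}$. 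By neighborhood interference (Assumption~\ref{ass:neighborhood}), $W_iY_i$ depends on $z$ only through coordinates in $S_i:=\{i\}\cup\mathcal{N}_i$. I would then expand $\mathrm{Var}(\mathrm{ATE}_{\mathrm{DN}})=\frac1{N^2}\sum_{i,i'}\mathrm{Cov}(W_iY_i,W_{i'}Y_{i'})$ and observe that, because the coordinates of $z$ are independent and the $D_k$ are mean zero, the covariance vanishes whenever $S_i\cap S_{i'}=\emptyset$, i.e. whenever $i,i'$ are at graph distance greater than $2$. Only $O(Nd^2)$ ordered pairs survive, since each node has at most $1+d+d(d-1)=O(d^2)$ nodes within distance $2$, reducing the task to bounding one covariance and counting configurations.

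The core estimate comes from expanding $W_iW_{i'}$ into its constituent terms and evaluating each ${\sf E}[(\prod D_\bullet)(\prod\xi_\bullet)Y_iY_{i'}]$ (and the subtracted product ${\sf E}[W_iY_i]{\sf E}[W_{i'}Y_{i'}]$, handled identically) by iterated conditioning, one coordinate at a time. The key per-coordinate rule I would prove is: at a coordinate $z_k$ carrying a \emph{single} weight factor, ${\sf E}[D_k\,g(z_k)]=g(1)-g(0)$ and ${\sf E}[\xi_k\,g(z_k)]=(1-p)g(1)+p\,g(0)$, each bounded by $O(\sup|g|)$ with no dependence on $p$; whereas a coordinate carrying a \emph{doubled} factor ($D_k^2$, $\xi_kD_k$, or $\xi_k^2$) contributes a multiplier of order $\frac1{p(1-p)}$. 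Since each of $W_i,W_{i'}$ contributes at most one $D$-factor and one $\xi$-factor to any product term, at most $m\le2$ coordinates can be doubled; and as $|Y_iY_{i'}|\le Y_{\max}^2$, every term is bounded by $Y_{\max}^2\big(\tfrac1{p(1-p)}\big)^{m}$.

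Finally I would sum over the three regimes indexed by $m$. For $m=0$, a generic distance-$\le2$ pair with distinct, freely chosen neighbor indices $j\in\mathcal{N}_i,\,j'\in\mathcal{N}_{i'}$ gives $O(Nd^4)$ configurations each of magnitude $O(Y_{\max}^2)$, contributing $O(d^4Y_{\max}^2/N)$ after dividing by $N^2$. For $m=1$, one coincidence (e.g. a shared neighbor $j=j'$, producing $D_j^2$) ties one index and pays a single factor $\frac1{p(1-p)}$ over $O(Nd^3)$ configurations, contributing $O\big(\frac{d^3Y_{\max}^2}{Np(1-p)}\big)$. For $m=2$ — the diagonal $i=i'$ with $\xi_i^2$ meeting a single $D_j^2$, or adjacent $i,i'$ with the pairs $\xi_iD_{i'}$ and $\xi_{i'}D_i$ — the two doublings collapse the index count to $O(Nd)$ and pay $\frac1{p^2(1-p)^2}$, contributing $O\big(\frac{dY_{\max}^2}{Np^2(1-p)^2}\big)$. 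Summing the three recovers the stated bound.

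The main obstacle is that $D_j$ for $j\in\mathcal{N}_i$ is \emph{not} independent of $Y_i$, since $z_j$ enters $Y_i$; the expectations therefore cannot be factored, and a blunt Cauchy--Schwarz bound — which would charge a factor $\frac1{p(1-p)}$ to every $D$ — is too lossy and misses the $p$-free $d^4$ term entirely. The decisive observation is that a lone $D_j$ resolves against $Y_i$ as a first finite difference $Y_i^{(z_j=1)}-Y_i^{(z_j=0)}$ of magnitude $O(Y_{\max})$ with no propensity blowup. Verifying the per-coordinate rule in every overlap configuration (adjacent pairs, distance-$2$ pairs sharing a neighbor, and coincidences between a neighbor index of one node and the center of the other), and confirming that the doubling count never exceeds $2$, is where the real bookkeeping lies.
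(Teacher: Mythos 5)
Your proposal is correct and follows essentially the same route as the paper's proof: expand the credit-assignment form \eqref{eq:DN-assignment-view}, note that covariances vanish for pairs at graph distance greater than two, and bound the surviving terms so that lone $\eta/\xi$ factors cost $O(1)$ while coincident (doubled) coordinates each cost $O\left(\frac{1}{p(1-p)}\right)$, with your three regimes $m=0,1,2$ matching exactly the paper's $d^{4}$, $\frac{d^{3}}{p(1-p)}$, and $\frac{d}{p^{2}(1-p)^{2}}$ contributions. The only cosmetic difference is that where you resolve a lone $D_{j}$ against $Y_{i}$ via an exact conditioning/finite-difference identity, the paper obtains the same $p$-free per-factor bound more bluntly from $|Y_{i}|\leq Y_{\max}$ together with ${\sf E}[|\eta|]=2$ (absolute moments factoring across independent coordinates), and then organizes the bookkeeping as an explicit four-case analysis ($i=j$, adjacent, distance-two, disjoint) with exact constants.
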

\begin{proof}
We present the detailed proof in Appendix~\ref{dn-variance-proof}, and only sketch out a brief calculation here. For simplicity of notation, define the quantities: 

\[\eta_{i} = \frac{z_i}{p} - \frac{1 - z_i}{1-p}\qquad \xi_{i} = \frac{z_i(1-p)}{p} + \frac{(1-z_i)p}{1-p}\]
Simple calculation gives us:
\begin{equation}
\label{basic-property}
    \mathbb{E}[\eta] = 0
    \quad \mathbb{E}[|\eta|] = 2 \quad \mathbb{E}[\xi] = 1\quad \mathbb{E}(\eta^2) = \frac{1}{p(1-p)}\quad \mathbb{E}(\xi^2) = \frac{3p^2-3p+1}{p(1-p)} 
\end{equation}
We can expand out the variance and use bilinearity of the covariance to obtain the following:
    \begin{align}
    \label{eqn:dn-variance-expansion}
        {\rm Var}({\rm ATE}_{\rm DN}) &= {\rm Var}\left(\frac{1}{N}\sum_{i}^NY_i \left(\eta_i + \xi_i \sum_{j\in \mN_i}\eta_j\right)\right)\nonumber\\
        & = \frac{1}{N^2}\sum_{i}^N\sum_{j}^N{\rm cov}\left(Y_i\left(\eta_i+\xi_i\sum_{k\in N_i}\eta_k\right),Y_j\left(\eta_j+\xi_j\sum_{l\in N_j}\eta_l\right)\right)
    \end{align}
    Note that for any $i,j\in[N]$, if $i$ and $j$ does not share any neighbor nodes, clearly we have ${\rm cov}\left(Y_i\left(\eta_i+\xi_i\sum_{k\in N_i}\eta_k\right),Y_j\left(\eta_j+\xi_j\sum_{l\in N_j}\eta_l\right)\right) = 0$. Hence we can reduce $j$ to those that share at least one neighbor with $i$, denote this set as $\mathcal{M}_i$. Note that $|\mathcal{M}_i|\leq d^2$. Then Equation~\ref{eqn:dn-variance-expansion} can be further reduced to
    \begin{align*}
        &\sum_{i}^N\sum_{j\in \mathcal{M}_i}{\rm cov}\left(Y_i\left(\eta_i+\xi_i\sum_{k\in N_i}\eta_k\right),Y_j\left(\eta_j+\xi_j\sum_{l\in N_j}\eta_l\right)\right)\\
        &\leq Nd^2 {\rm cov}\left(Y_i\left(\eta_i+\xi_i\sum_{k\in N_i}\eta_k\right),Y_j\left(\eta_j+\xi_j\sum_{l\in N_j}\eta_l\right)\right)\\
        &\lessapprox O\left(Y_{\max}^2Nd^2\cdot d^2\E(|\xi_i||\eta_k||\xi_j|\eta_l| )+ Y_{\max}^2Nd^2 \cdot d \cdot \E(|\eta_u\xi_i\eta_k\xi_l|) \right)\\
        &\approx O(Y_{\max}^2 Nd^4 +Y_{\max}^2 Nd^3/p(1-p)).
    \end{align*}
\end{proof}

\section{Clustering with DN}
\label{sec:dn-cluster}
In this section, we extend the DN (Difference-in-Neighbors) estimator from unit-level randomization to cluster-level randomization designs, enabling a bias-variance tradeoff curve across varying cluster granularity. This allow us to surpass the limitations of traditional clustering approaches. To formalize this, let \(\mathcal{C} = \{C_{1}, C_{2}, \dotsc, C_{|\mathcal{C}|}\}\) denote a partition of the node set $[N]$. Under cluster-level randomization, each cluster \(C_{i}\) is assigned treatment with probability \(p\), independent of other clusters. Let \(z_{C_i}\) represent the treatment assignment indicator for cluster \(C_i\).

A straightforward approach is to contract the graph and treat each cluster as a single node, reducing the problem to unit-level randomization. However, we demonstrate in Section~\ref{sec:small-world} that this approach is suboptimal, since it fails to fully leverage the graph structure and unit-level observations.

To address this limitation, we propose a novel estimator that computes DN at the individual level while aggregating neighbor information at the cluster level. Specifically, for each node \(i\), let \(\mN_i^C\) denote the set of clusters containing at least one neighbor of node \(i\). The proposed estimator is defined as:

\begin{align}\label{eq:DN-clustering}
{\rm ATE}_{\rm DN-Cluster} = \frac{1}{N} \sum_{i=1}^{N}  \left(\left(\frac{z_i}{p} - \frac{1-z_i}{1-p}\right) +\xi_{i}\sum_{C_j \in \mN_i^C}\left(\frac{z_{C_j}}{p} - \frac{1-z_{C_j}}{1-p}\right)\right)Y_{i},
\end{align}
We can show that the estimator \({\rm ATE}_{\rm DN-Cluster}\) exhibits second-order bias with a reduced dependence on degree compared to the unit-level design due to the clustering structure. Formally, let $d_i$ be the degree of node $i$ and \(d_{i}^{C}\) denote the number of neighbors of node \(i\) that belong to the same cluster. We establish the following theorem:

\begin{theorem}
    \label{thm:dn-cluster-bias}
    The bias of the cluster-level DN estimator satisfies:
    \[
    | {\rm ATE} - \mathbb{E}[ {\rm ATE}_{\rm DN-Cluster}]| = O\left(\frac{\sum_{i=1}^{N}(d_i-d^{C}_i)^2}{N}\delta^2\right),
    \]
\end{theorem}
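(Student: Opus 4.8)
The plan is to mirror the Taylor-expansion argument behind \cref{th:general-bias}, but to carry it out in the space of \emph{cluster} treatment probabilities rather than per-node probabilities. The central observation is that cluster randomization couples the own treatment $z_i$ to the treatments of all of $i$'s same-cluster neighbors: conditioning on $z_i = a$ forces every neighbor of $i$ lying in $i$'s own cluster $C(i)$ to also take the value $a$. Consequently, when we aim to approximate global treatment ($a=1$) or global control ($a=0$), the $d_i^C$ same-cluster neighbors are already pinned at the correct global value ``for free,'' and only the cross-cluster neighbors---of which there are $d_i - d_i^C$---must be extrapolated from the experimental probability $p$ toward $a$. This is exactly the source of the improved $(d_i-d_i^C)^2$ dependence.

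Concretely, I would first fix a node $i$ and define, for $a\in\{0,1\}$, the conditional expected-outcome function $F_i^a(q)$, where $q$ ranges over the treatment probabilities of the clusters in $\mathcal{N}_i^C$ (the clusters \emph{other than} $C(i)$ that contain a neighbor of $i$), with $z_i$ conditioned to $a$ and hence the same-cluster neighbors fixed at $a$. Writing node $i$'s contribution to the ATE as $F_i^1(\vec 1) - F_i^0(\vec 0)$, I would form the first-order Taylor expansion of each $F_i^a$ around $q = p\vec 1$ and evaluate at $q = a\vec 1$. The next step is to show, exactly as in the unit-level proof, that ${\rm ATE}_{\rm DN-Cluster}$ is an unbiased single-sample estimator of this first-order approximation: the own-treatment propensity term $\eta_i Y_i$ recovers the zeroth-order value $F_i^1(p) - F_i^0(p)$---already with the same-cluster neighbors set to their global values---while each cluster credit $\xi_i \eta_{C_j} Y_i$ recovers the cluster-level derivative $\partial F_i^a/\partial q_{C_j}$ via the identity $\mathbb{E}[f_i^a \mid z_i=a, z_{C_j}=b] = \mathbb{E}\big[\frac{\mathbf{1}\{z_i=a\}}{\Pr(z_i=a)}\frac{\mathbf{1}\{z_{C_j}=b\}}{\Pr(z_{C_j}=b)}Y_i\big]$. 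A routine matching of terms then confirms unbiasedness, so the bias equals the second-order Taylor remainder, summed over nodes.

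It remains to bound the remainder $(a-p)^\top H_i^a(q')(a-p)$, where $H_i^a$ is the Hessian of $F_i^a$ in the cluster probabilities. Here I would exploit two facts. First, $F_i^a$ is \emph{multilinear} in the cluster probabilities (each $q_{C_j}$ enters the assignment distribution to the first power), so all diagonal Hessian entries vanish and only cross-cluster pairs $C\neq C'$ survive. Second, for $C\neq C'$ the entry $\partial^2 F_i^a/\partial q_C\partial q_{C'}$ is a cluster-level second difference; writing it through the reparameterization $f_i^a = g_i^a(\delta\,\cdot)$ and applying the chain rule across the neighbors in each cluster gives $|\partial^2 F_i^a/\partial q_C\partial q_{C'}| \le |C\cap\mathcal{N}_i|\,|C'\cap\mathcal{N}_i|\,L\delta^2$ by \cref{ass:smooth}, since every cross-cluster neighbor pair $(j,k)$ contributes a term bounded by $L\delta^2$. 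Summing over cluster pairs and using $(a-p)^2\le 1$ yields a per-node bound $\sum_{C\neq C'}|C\cap\mathcal{N}_i||C'\cap\mathcal{N}_i|\,L\delta^2 \le \big(\sum_{C\in\mathcal{N}_i^C}|C\cap\mathcal{N}_i|\big)^2 L\delta^2 = (d_i-d_i^C)^2 L\delta^2$; averaging over the $N$ nodes gives the theorem.

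I expect the main obstacle to be the careful bookkeeping around the own cluster. One must verify that the own-treatment term $\eta_i Y_i$ genuinely captures the same-cluster neighbors at their global values---so that $C(i)$ must be \emph{excluded} from the correction sum $\mathcal{N}_i^C$, since including it would inject an $O(1)$ bias rather than a second-order one---and that the cluster-level second differences can be controlled pairwise through the smoothness assumption even though a single cluster perturbation flips many of $i$'s neighbors at once. Getting the chain-rule bound on these block second differences right, and confirming that the within-cluster pairs drop out via multilinearity while only the $(d_i-d_i^C)^2$ cross-cluster pairs remain, is the crux of the argument.
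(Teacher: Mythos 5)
Your proposal is correct and follows essentially the same route as the paper's proof: Taylor-expand the conditional expected outcomes $F_i^a$ in the cluster treatment probabilities (with $C(i)$ excluded from $\mathcal{N}_i^C$, exactly as the paper defines it), observe that ${\rm ATE}_{\rm DN-Cluster}$ is unbiased for the first-order approximation, and bound the Hessian remainder by $\sum_{C\neq C'}|C\cap\mathcal{N}_i|\,|C'\cap\mathcal{N}_i|\,L\delta^2 \leq \left(d_i-d_i^C\right)^2 L\delta^2$. The only immaterial difference is in the key lemma bounding the block second differences by $mn\,L\delta^2$: you integrate the mixed partials of $g_i^a$ over the two cluster blocks directly via \cref{ass:smooth}, whereas the paper telescopes the block flips into single-node flips and runs a double induction on $(m,n)$ from the discrete unit-level bound—both yield the identical estimate.
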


Notably, the bias in the cluster-level DN estimator scales quadratically not only with respect to \(\delta\) but also with \((d_i-d_i^{C})^2\), rather than \(d_i^2\). This highlights an additional dimension for bias reduction that is orthogonal to assumptions over the outcome function.

\begin{table}[htbp]
      \centering
      \begin{tabular}{c|cc}
        Cluster-Level Estimator & Bias & Variance \\
        \hline 
        DM & $\Theta\left(\frac{\sum_{i}\left(d_i - d^{C}_i\right)}{N} \delta\right)$ & $\Theta\left(\frac{1}{|\mathcal{C}|}\right)$\\
        DN & $\Theta\left(\frac{\sum_{i}\left(d_i - d^{C}_i\right)^2}{N}\delta^{2}\right)$ & $O\left(\frac{d_{C}^{4}}{|\mathcal{C}|}\right)$ \\
        HT & 0 & $\Theta\left(\frac{2^{d_C}}{|\mathcal{C}|}\right)$
      \end{tabular}
      \caption{Summary of bias and variance results for the DM, DN, and HT estimators in the cluster setting. Here, \(d_{i}\) denotes the degree of node \(i\) in the original graph \(G\), and \(d_{i}^{C}\) represents the number of neighbors of \(i\) that belong to the same cluster as \(i\). \(|\mathcal{C}|\) is the total number of clusters, and \(d_{C} = \max_{i} |\mathcal{N}^{C}_i|\), where \(\mathcal{N}^{C}_i\) is the set of clusters connected to node \(i\). The results illustrate how clustering reduces the bias of the DM estimator by increasing \(d_{i}^{C}\), while simultaneously increasing the variance from \(1/N\) to \(1/|\mathcal{C}|\). Additionally, the DN estimator exhibits similar behavior but with a bias of \(O(\delta^2)\) instead of $O(\delta)$. This highlights how DN introduces a distinct bias-variance tradeoff that can be combined with clustering design.}
      \label{tab:results-summary-cluster}
    \end{table}

\textbf{Proof Sketch.}
The proof follows a Taylor expansion approach similar to Theorem~\ref{dn-bias}, but with a much finer control over the second-order terms. Let \(f(z^{C})\) represent the outcome for a node given a vector of cluster-level treatment assignments \(z^{C} \in \{0, 1\}^{|\mathcal{C}|}\). Define \(F^{1}(w) = {\sf E}[f(z^{C}) | z_{C_i}=1]\) and \(F^{0}(w) = {\sf E}[f(z^{C}) | z_{C_i}=0]\), where the expectation is taken over \(z_{C_{j}} \sim {\rm Bern}(w_j)\) for \(j \neq i\). The key step involves expanding \(F^{1}(p)\) to approximate \(F^{1}(\vec{1})\) and \(F^{0}(p)\) to approximate \(F^{0}(\vec{0})\). The challenge lies in connecting the finite-difference assumption at the unit level (Equation~\ref{assum:second-order-finite-difference}) with the cluster-level bound. Details are deferred to the Appendix. 

Finally, we can bound variance of the DN estimator under cluster-level randomization:

\begin{theorem}
    \label{thm:dn-cluster-variance}
    The variance of the cluster-level DN estimator satisfies:
    \[
    {\rm Var}({\rm DN-Cluster})\leq O\left(\frac{Y_{\max}^2}{|\mathcal{C}|}\cdot\left(d_{C}^4 +\frac{d_{C}^3}{p(1-p)}\right)\right),
    \]
    where \(d_C\) denotes the cluster-level degree of the nodes and \(Y_{\max}\) is the maximum potential outcome.
\end{theorem}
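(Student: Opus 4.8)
The plan is to mirror the variance analysis of the unit-level estimator (Theorem~\ref{dn-variance}), replacing unit-level treatments with cluster-level ones. Writing $\eta_{C} = \frac{z_{C}}{p} - \frac{1-z_C}{1-p}$ for the cluster-level analogue of $\eta$ and keeping $\xi_i$ as defined in \eqref{basic-property}, the credit-assignment form \eqref{eq:DN-clustering} reads ${\rm ATE}_{\rm DN-Cluster} = \frac{1}{N}\sum_i T_i$, where $C(i)$ denotes the cluster of node $i$ and $T_i = Y_i\big(\eta_{C(i)} + \xi_i \sum_{C_k \in \mathcal{N}_i^C}\eta_{C_k}\big)$. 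Expanding the variance exactly as in \eqref{eqn:dn-variance-expansion} gives $\frac{1}{N^2}\sum_{i,j}{\rm cov}(T_i,T_j)$, and the moment identities \eqref{basic-property} carry over verbatim to the cluster-level $\eta_C$, since each $z_C \sim {\rm Bern}(p)$ independently, supplying the elementary bounds on the surviving terms.

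The first key step is to eliminate most covariances. Because the cluster treatments $(z_C)_{C\in\mathcal{C}}$ are mutually independent and outcomes obey neighborhood interference (Assumption~\ref{ass:neighborhood}), each summand $T_i$ depends only on the cluster treatments indexed by $\mathcal{R}_i := \{C(i)\}\cup \mathcal{N}_i^C$, a set of size at most $d_C+1$. Hence ${\rm cov}(T_i,T_j)=0$ whenever $\mathcal{R}_i\cap\mathcal{R}_j=\emptyset$, and it suffices to sum over pairs whose relevant cluster sets overlap. I would then reorganize the surviving double sum by the shared cluster $C$, letting $n_C = |\{i: C\in\mathcal{R}_i\}|$ count the nodes lying in $C$ or adjacent to $C$, so that the number of contributing ordered pairs is at most $\sum_C n_C^2$.

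The second step bounds a single covariance. Using $|Y_i|\le Y_{\max}$ to factor out the outcomes, each ${\rm cov}(T_i,T_j)$ is controlled by expectations of products of the $\eta_C$ and $\xi_i$ factors, scaled by $Y_{\max}^2$. Exactly as in the unit-level argument, the dominant contributions split into two regimes: terms in which the four $\eta$-indices are all distinct, of which there are $O(d_C^2)$ and each is bounded by a constant via \eqref{basic-property}; and terms in which two $\eta$-indices coincide, of which there are $O(d_C)$ and each carries a factor ${\sf E}[\eta^2]=1/(p(1-p))$. This yields a per-pair bound $O\big(Y_{\max}^2(d_C^2 + d_C/(p(1-p)))\big)$, so that ${\rm Var}\le \frac{Y_{\max}^2}{N^2}\big(\sum_C n_C^2\big)\,O\big(d_C^2 + d_C/(p(1-p))\big)$.

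The main obstacle is converting the $1/N^2$ normalization into the advertised $1/|\mathcal{C}|$ rate, which is precisely what distinguishes clustered from unit-level randomization. This hinges on controlling $\sum_C n_C^2$: since each node belongs to $\mathcal{R}_i$ for at most $d_C+1$ clusters, one has $\sum_C n_C \le (d_C+1)N$, and in the regime of (approximately) balanced clusters of size $O(N/|\mathcal{C}|)$ this gives $n_C = O(d_C N/|\mathcal{C}|)$ and hence $\sum_C n_C^2 = O(d_C^2 N^2/|\mathcal{C}|)$, producing the claimed $O(Y_{\max}^2(d_C^4 + d_C^3/(p(1-p)))/|\mathcal{C}|)$. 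A secondary subtlety, absent in a purely linear model, is that $\xi_i$ and the outcomes $Y_i$ are themselves correlated with the cluster treatments appearing in $\eta_C$; the analysis must therefore bound covariances through boundedness of $Y$ together with the moment identities for $\xi$, rather than treating the coefficients of the $\eta_C$ as fixed. I expect the careful bookkeeping of these shared-treatment terms, together with the cluster-size accounting in $\sum_C n_C^2$, to be where the real work lies.
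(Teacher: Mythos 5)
Your proposal is correct and follows the route the paper intends: the paper in fact gives no separate proof of Theorem~\ref{thm:dn-cluster-variance}, relying implicitly on the cluster-level analogue of the Theorem~\ref{dn-variance} argument, which is exactly what you carry out (covariance expansion over $T_i$, vanishing of covariances when the relevant cluster sets $\mathcal{R}_i, \mathcal{R}_j$ are disjoint, per-pair moment bounds splitting the distinct- versus coinciding-$\eta$ cases, and overlap counting). Your explicit observation that the $1/|\mathcal{C}|$ rate requires approximately balanced clusters --- so that $n_C = O(d_C N/|\mathcal{C}|)$ and $\sum_C n_C^2 = O(d_C^2 N^2/|\mathcal{C}|)$, converting the $1/N^2$ normalization --- makes precise an assumption the theorem statement leaves implicit, and is the right bookkeeping step.
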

The results are summarized in \cref{tab:results-summary-cluster}.
Equipped with the bias-variance bound, the effectiveness of the cluster-level DN estimator is validated both theoretically through the analysis of small-world graphs and empirically across various practical settings.

\section{Small-World Network: Bias-Variance Analysis}\label{sec:small-world}
In this section, we examine a canonical model in network science: the small-world network \cite{watts1998collective}. We compare various estimators under this model, which is characterized by high clustering coefficients and low average path lengths. Small-world networks have been widely used to model social networks, brain neuron networks, airport networks, and word co-occurrence networks, among others (see \cite{newman2000models} for a review).

\begin{figure}[h]
    \centering
    \begin{minipage}[t]{0.48\textwidth}
    \includegraphics[width=\linewidth]{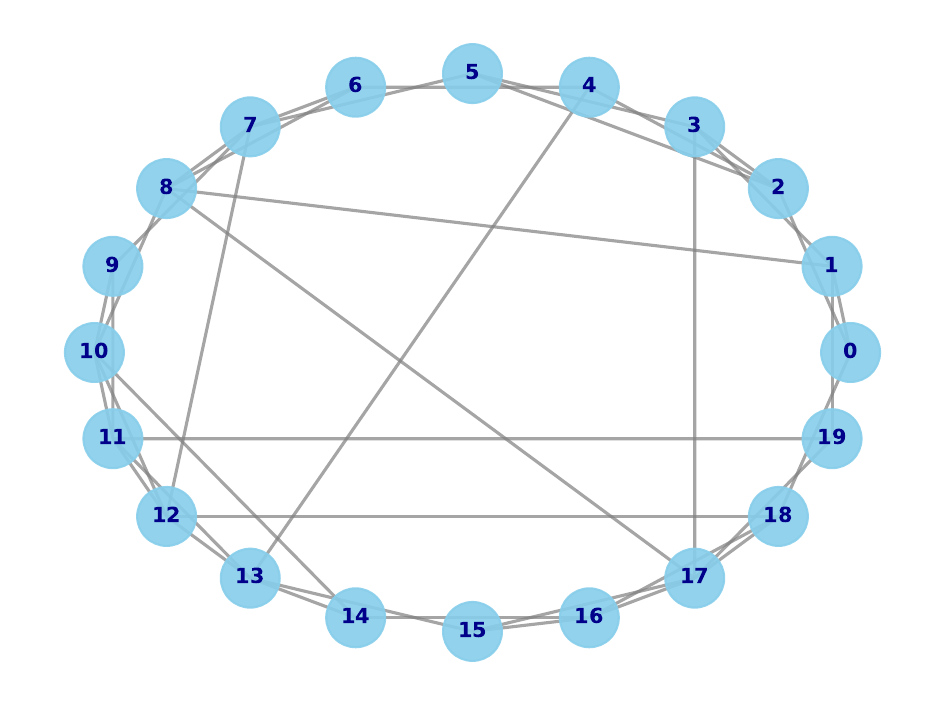}
    \label{fig:small-world-network}
    \end{minipage}
    \begin{minipage}[t]{0.48\textwidth}
    \centering
    \includegraphics[width=\linewidth]{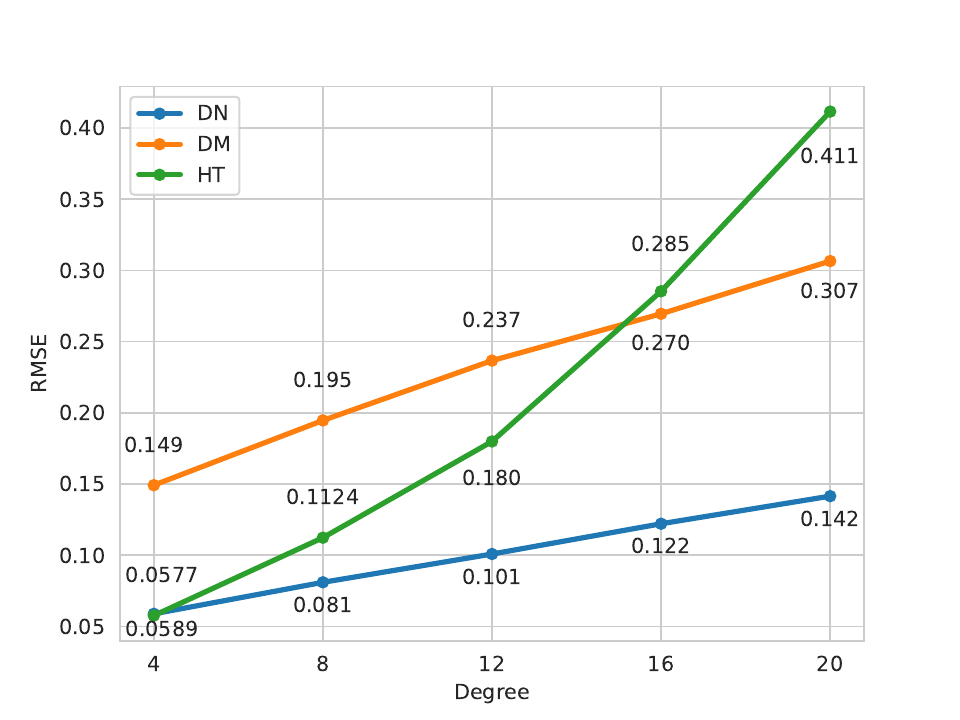}
    \label{fig:optimal-cluster-bias-variance-curve}
    \end{minipage}
    \caption{\textit{Left:} An example of a small-world network with parameters $N=20$, $d=4$, and $q=0.2$. \textit{Right:} RMSE of different estimators while varying degree $d$. The graph has $N = 10000$ with rewiring probability $q = 0.05$. Each point in the plot represents the RMSE corresponding to the optimal cluster size that minimizes the RMSE for each estimator.}
    \label{fig:sw-curve}
\end{figure}

\subsection{Small-World Network Model}
A small-world network $G(N, d, q)$ is generated through the following steps:
\begin{enumerate}
    \item \textbf{Initialization}: Start with a $d$-regular ring graph, where each node is connected to its $d/2$ nearest neighbors on both the left and right.
    \item \textbf{Rewiring}: For each node and each of its rightward $d/2$ edges, with probability $q$, replace the edge with a random connection to another node chosen uniformly at random.
\end{enumerate}

An example of such a network is shown in Fig.~\ref{fig:sw-curve}. When $q=0$, the graph remains a ring with a diameter of $L=O(N)$. When $q=1$, the graph becomes an Erdős–Rényi random graph with a diameter of $L=O(\log(N))$. For small values of $q$, the diameter decreases sharply while the clustering properties remain similar to those of a ring graph, capturing the "small-world" effect observed in real-world networks \cite{watts1998collective}. In our experiments, we consider the canonical scenario where $d=O(\log^{c}(N))$ for $c \geq 1$ and set $p=1/2$ to simplify the analysis.

\subsection{Unit-Level Randomization Design}
We first analyze the performance of the estimators under unit-level randomization. We compare DM, DN, and HT estimators in this setting.
We have
\begin{enumerate}
    \item \textbf{DM Estimator}:
    \begin{itemize}
        \item \textbf{Bias}: The bias of the DM method is $\tilde{O}(\delta)$, as derived in \cref{sec:DN-unit}.
        \item \textbf{Variance}: The variance is bounded by $\frac{Y_{\max}^2}{Np(1-p)}$, which is on the order of $O\left(\frac{Y_{\max}^2}{N}\right)$.
    \end{itemize}

    \item \textbf{DN Estimator}:
    \begin{itemize}
        \item \textbf{Bias}: By Theorem~\ref{dn-bias}, the bias is $\tilde{O}(\delta^2)$.
        \item \textbf{Variance}: By Theorem~\ref{dn-variance}, the variance is bounded by $O\left(\frac{Y_{\max}^2}{N}\cdot \left(\frac{d^3}{p(1-p)}+ d^4\right)\right) = \tilde{O}\left(\frac{Y_{\max}^2}{N}\right)$.
    \end{itemize}

    \item \textbf{HT Estimator}:
    \begin{itemize}
        \item \textbf{Bias}: The bias is $0$ due to propensity score adjustment.
        \item \textbf{Variance}: The variance is $O\left(Y_{\max}^2 N^{\log^{c-1}(N)}\right)$.
    \end{itemize}
\end{enumerate}

The results are summarized in Table~\ref{tab:unit_level_comparison}. From this analysis, the DN estimator demonstrates a favorable bias-variance tradeoff: \textit{compared to DM, it reduces the bias from $O(\delta)$ to $O(\delta^2)$ while only slightly increasing the variance by logarithmic factors. Moreover, its variance is exponentially smaller than that of any unbiased estimator. These results hold for any graph with degree $d$ scaling logarithmically in $N$.}
\begin{table}[h!]
\centering
\caption{Small-World Network: Comparison of Bias and Variance for Unit-Level Randomization. The DN estimator achieves a second-order bias reduction compared to DM. Its variance is exponentially smaller than that of unbiased estimators and only logarithmically larger than that of the DM estimator.}
\label{tab:unit_level_comparison}
\begin{tabular}{|l|c|c|}
\hline
\textbf{Estimator} & \textbf{Bias} & \textbf{Variance} \\
\hline
DM & $\tilde{O}(\delta)$ & $O\left(\frac{Y_{\max}^2}{N}\right)$ \\
\hline
DN & $\tilde{O}(\delta^2)$ & $\tilde{O}\left(\frac{Y_{\max}^2}{N}\right)$ \\
\hline
HT & $0$ & $O\left(Y_{\max}^2 N^{\log^{c-1}(N)}\right)$ \\
\hline
\end{tabular}
\end{table}

\subsection{Clustering Design}

Next, we investigate the impact of clustering on the bias-variance tradeoff for small-world networks. For simplicity, we consider clusters formed by grouping every $m$ adjacent nodes in the ring graph. The results are summarized in Table~\ref{tab:clustering_comparison}.

\begin{enumerate}
    \item \textbf{DM Estimator}:
    \begin{itemize}
        \item \textbf{Bias}: The bias depends on the number of edges crossing clusters. For the ring graph, $\tilde{O}((1-q)\frac{1}{m})$ edges cross clusters, and the new random edges contribute $\tilde{O}(q)$. Thus, the bias is $\tilde{O}(\delta(\frac{1}{m} + q))$ for $q \leq \frac{1}{2}$.
        \item \textbf{Variance}: The variance scales linearly with the inverse of the number of clusters, yielding $O\left(\frac{Y_{\max}^2m}{N}\right)$.
    \end{itemize}

    \item \textbf{DN Estimator}:
    \begin{itemize}
        \item \textbf{Bias}: By Theorem~\ref{thm:dn-cluster-bias}, the bias depends on the out-cluster degree of each node. It is given by:
        \[
        \frac{\sum_{i=1}^{N} (d_i - d_i^{C})^2}{N} = \tilde{O}\left(\frac{1}{m}\delta^2\right) + \tilde{O}\left(q\delta^2\right) = \tilde{O}\left(\left(\frac{1}{m}+q\right)\delta^2\right).
        \]
        \item \textbf{Variance}: By Theorem~\ref{thm:dn-cluster-variance}, the variance is $\tilde{O}\left(\frac{Y_{\max}^2m}{N}\right)$.
    \end{itemize}

    \item \textbf{HT Estimator}:
    \begin{itemize}
        \item \textbf{Bias}: The bias remains $0$ due to propensity score correction.
        \item \textbf{Variance}: The variance grows exponentially with the number of cluster neighbors $d_{C}$, resulting in $\tilde{O}\left(\frac{Y_{\max}^2m N^{q\log^{c-1}(N)}}{N}\right)$. While this bound may theoretically perform well for sufficiently small $q$, it is sensitive to scenarios where a small fraction of nodes have slightly higher degrees. In practice, the HT estimator often exhibits prohibitively large variance, as observed in our experiments.
    \end{itemize}
\end{enumerate}

\textbf{Bias-Variance Tradeoff.}
Compared to the DM method, which is commonly used in practice, the DN estimator with clustering design offers a superior bias-variance tradeoff. For example, when $q=0$, the optimal cluster size $m$ to minimize the RMSE of the DM estimator is $m^{*} := (\delta^2 N)^{1/3}$, yielding:
\[
RMSE^{*}(DM) = \frac{\delta^{1/3}}{N^{1/3}}.
\]

In contrast, the optimal cluster size for the DN estimator is $m^{*} = (\delta^{4}N)^{1/3}$, resulting in:

\[
RMSE^{*}(DN) = \frac{\delta^{2/3}}{N^{1/3}} \ll RMSE^{*}(DM).
\]

By selecting a smaller cluster size, the DN estimator achieves an RMSE that breaks the fundamental limit of clustering with DM estimators. Experimentally we optimize the cluster size for various estimators and the RMSE curve is shown in Figure~\ref{fig:sw-curve} which clearly demonstrates the bias-variance frontier established by DN. 


\begin{table}[h!]
\centering
\caption{Small-world Network: Comparison of Bias and Variance for Clustering Design. Every cluster consists of $m$ adjacent nodes. DN is exhibiting a new frontier of bias-variance curve when optimizing the cluster design, see experiments in Figure~\ref{fig:sw-curve}.}
\label{tab:clustering_comparison}
\begin{tabular}{|l|c|c|}
\hline
\textbf{Estimator} & \textbf{Bias} & \textbf{Variance} \\
\hline
DM (Cluster size $m$) & $\tilde{O}\left(\frac{1}{m}\delta + q\delta \right)$ & $O\left(\frac{Y_{\max}^2 m}{N}\right)$ \\
\hline
DN (Cluster size $m$) & $\tilde{O}\left(\frac{1}{m}\delta^2 + q\delta^2 \right)$ & $\tilde{O}\left(\frac{Y_{\max}^2 m}{N}\right)$\\ 
\hline
HT (Cluster size $m$) & $0$ & $O\left(\frac{Y_{\max}^2 m N^{q\log^{c-1}(N)}}{N}\right)$\\
\hline
\end{tabular}
\end{table}

\section{Experiments}
\label{sec:experiments}
A key contribution of this work is the superior practical performance of the DN estimator on experiments with realistic structures and scale. Here, we demonstrate this via a series of experiments comparing DN to the incumbent DM and HT estimators, across multiple graph structures and outcome models. 

We first study synthetic outcome models under various random graph structures (Watts Strogatz small world graph; Erdos-Renyi graph) as well as a real social network on Twitter \cite{twitter_network}). These experiments demonstrate the performance of DN both under unit-level Bernoulli randomization and under cluster randomization. Next, we evaluate DN in a realistic city-scale simulator of a ridesharing platform, with no closed-form ground truth outcome model, demonstrating similar conclusions. Taken together, these results demonstrate the following:

\begin{enumerate}
    \item The DN estimator achieves large bias reductions relative to DM and large variance reductions relative to HT, both with and without clustering; the end result is substantially reduced error compared to either incumbent.
    \item DN estimation works together with clustering to reduce bias. As a result, optimal cluster sizes are smaller (thus more clusters) when DN is used for estimation, as compared with DM. 
\end{enumerate}

\textbf{Clustering:} Throughout this section, we construct clusters using the CPM clustering scheme \cite{CPM}, and study various settings of the ``resolution'' parameter, which essentially controls the number of clusters. We refer the reader to \cite{CPM} for details.

\subsection{DN outperforms clustering with DM}
We first study the benefits of DN in a setting where direct interference from neighbors is large, but indirect interference due to interactions between neighbors is comparatively small. In particular, we consider the following outcome model: 
\begin{equation}
    \label{eqn:experiment}
    f_i(z) = c_{0}\sum_{j=\mN_i \cup\{i\}} \frac{z_iz_j}{|\mN_i|} + c_1\prod_{j\in \mN_i\cup\{i\}}\left(1+c_2 z_j\right)
\end{equation}
Note that based on this outcome model, the finite second order difference (Equation~\ref{assum:second-order-finite-difference}) is controlled exactly by $c_2$, since $\Delta f_{jk} = (1+c_2)^2 - 2(1+c_2) +1 = c_2^2$. This parameter thus scales the smoothness of the outcome function. Under both the Erdos-Renyi random graph and a small world graph setting, we perform estimation for number of nodes $N=5000,10000,15000$, representing the individuals in a network. For each $N$, we run $5$ random graphs each for $1000$ trials. We compare DN under unit-level Bernoulli treatment with DM across different clusterings generated by \cite{CPM} at varying resolution levels. 

Although the Horvitz-Thompson estimator is unbiased in this setting, its variance is excessively high under both unit-level Bernoulli designs and cluster-based Bernoulli designs. Due to this instability, we exclude it from our experimental results. We measure the performance of various estimators by the relative error and the corresponding RMSE.
\begin{align*}
    \text{relative error} = \frac{1}{K}\sum_{k=1}^K\frac{\hat{\rm ATE}_k - {\rm ATE}}{{\rm ATE}}, \quad \text{RMSE} =  \sqrt{\frac{1}{K}\sum_{k=1}^K\left(\hat{\rm ATE}_k - {\rm ATE}\right)^2}
\end{align*}
where ${\rm \hat{ATE}}_k$ is the estimator of trial $k$ and ${\rm ATE}$ is the estimand.

Figure~\ref{fig:sw_unbiased} summarizes the results of the Small World graph, and Figure~\ref{fig:er-unbiased} summarizes the results of the Erdos-renyi results.

\begin{figure}[h]
    \centering
    \begin{minipage}[b]{0.33\textwidth}
        \includegraphics[scale = 0.63]
        {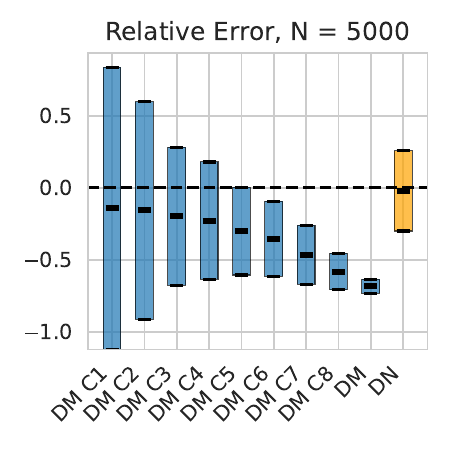}
    \end{minipage}\hfill
    \begin{minipage}[b]{0.33\textwidth}
        \includegraphics[scale = 0.63]{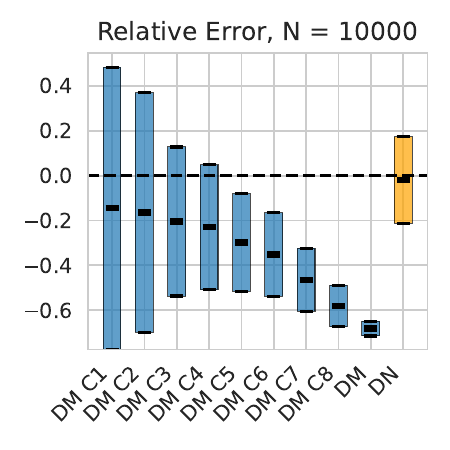}
    \end{minipage}\hfill
    \begin{minipage}[b]{0.33\textwidth}
        \includegraphics[scale = 0.63]{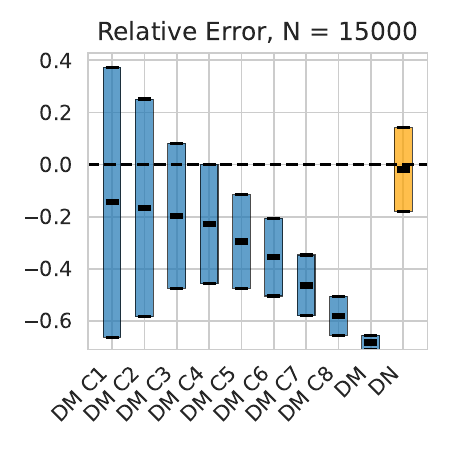}
    \end{minipage}
    
    \caption{Relative error with 95\% confidence interval for small world random graph. We take the degree of the starting ring graph $d=20$, and set the rewiring probability $q= 0.1$. Again clusters $1$ to $8$ are produced with CPM clustering with resolution $0.005, 0.01, 0.05, 0.1,0.3, 0.5, 0.7, 0.9$. The rough size of the clusters ranges from $<10^2$ for resolutions $<0.1$, to around $10^2$ going up to $\sim600$ for resolutions $<0.7$, $10^3$ for resolution $0.7,0.9$, and Node size for $1.0$ and above.}
    \label{fig:sw_unbiased}
\end{figure}

\begin{figure}[h]
    \centering
    \begin{minipage}[b]{0.33\textwidth}
        \includegraphics[scale = 0.63]
        {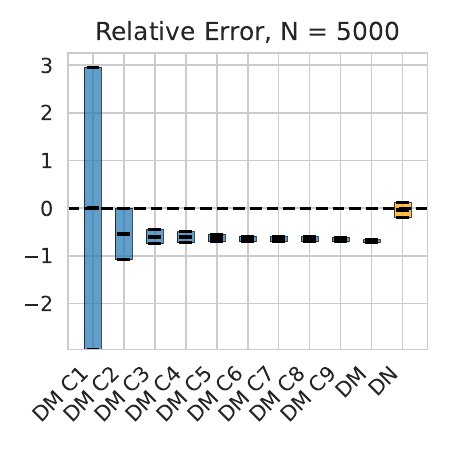}
    \end{minipage}\hfill
    \begin{minipage}[b]{0.33\textwidth}
        \includegraphics[scale = 0.63]{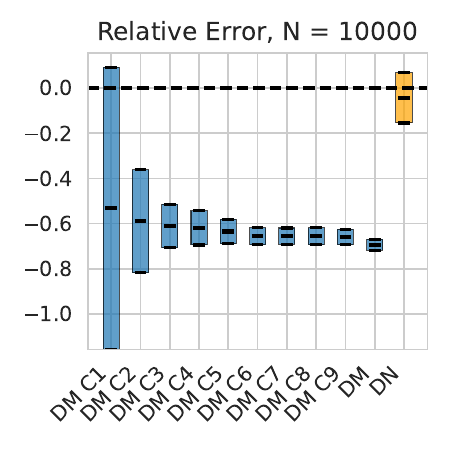}
    \end{minipage}\hfill
    \begin{minipage}[b]{0.33\textwidth}
        \includegraphics[scale = 0.63]{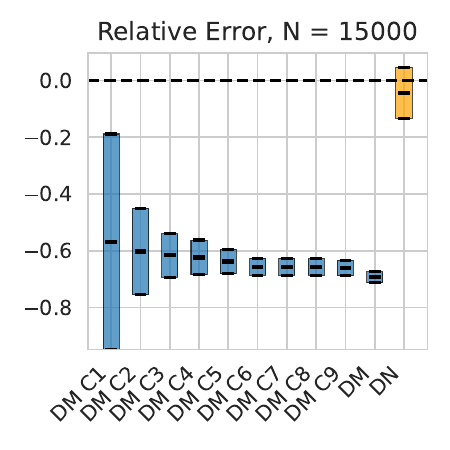}
    \end{minipage}
    
    \caption{Relative bias with 95\% confidence interval for Erdos-Renyi random graph. We take $p_{deg} = 20/N$, meaning the expected degree is $20$. Clusters $1$ to $9$ are produced with CPM clustering with resolution $0.005 0.01, 0.05, 0.1,0.3, 0.5, 0.7, 0.9, 1.0$.}
    \label{fig:er-unbiased}
\end{figure}

\textbf{Twitter network:} We also conducted experiments on a real-world social network with $N = 81,306$ nodes and $E = 1,768,149$ edges \cite{twitter_network}. To better simulate real-world conditions, we add on an additional $\epsilon\sim N(0,c)$ noise to the outcome function. The results are summarized in Figure~\ref{fig:social network}.

\begin{figure}[h]
\begin{minipage}{0.99\textwidth}
    \centering
    \begin{minipage}[b]{0.48\textwidth}
        \centering%
        \includegraphics[scale = 0.47]{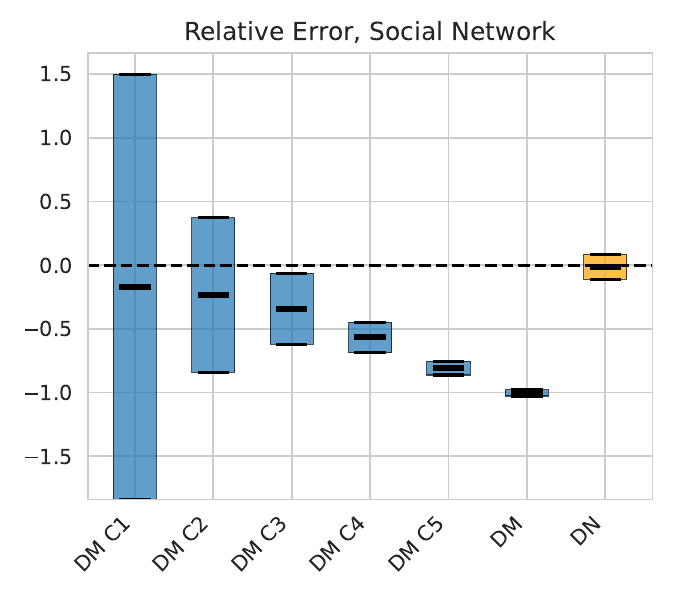}%
    \end{minipage}%
    \hfill
    \begin{minipage}[b]{0.44\textwidth}
        \centering%
            \begin{tabular}{lcc}
                \hline
                \textbf{Estimator} & \textbf{Clusters} &  \textbf{RMSE}  \\
                \hline
                DN (Unit-level) & $81306$ & ${\bf 0.06}$ \\
                DM (Unit-level) & $81306$ & $1.22$  \\
                DM (Cluster 1) & $3967$ & $1.20$\\
                DM (Cluster 2) & $6003$ & $0.59$\\
                DM (Cluster 3) & $13296$ & $0.52$\\ 
                DM (Cluster 4) & $24388$ &  $0.71$\\
                DM (Cluster 5) & $45299$ & $0.99$ \\
                \hline
                \end{tabular}
    ~\newline
    \vspace{0.5cm}
    \end{minipage}%
\end{minipage}
    \caption{Real twitter network graph with synthetic outcome function. Clusters 1 through 5 correspond to resolution  $0.0001$, $0.001$, $0.01$, $0.1$, $0.5$. \textit{Left:} Relative error across $100$ trials. \textit{Right:} RMSE and clustering information. DN's performance dominates DM regardless of the clustering scheme. }
    \label{fig:social network}
\end{figure}

\textbf{DN achieves a superior bias-variance tradeoff:} Our estimator significantly reduces bias compared to DM, with only a modest increase in variance. While clustering helps DM mitigate bias, it does so at the expense of a substantially higher variance. These findings are consistent across different graph sizes and remain robust in Erdos-Rényi settings.

\subsection{DN with clustering}
We next examine a less benign setting, in which we increase the bias of DN by increasing $c_2$, thereby increasing the bias upper bound from \cref{th:general-bias}. We show the results for this setting in \cref{fig:sw_bias}, where we demonstrate that:

\begin{enumerate}
    \item  \textbf{DN with unit randomization achieves lower RMSE than DM with any clustering scheme}, as before, albeit with substantially larger bias than in the previous setting.
    \item  \textbf{Combining DN with clustering achieves superior bias-variance tradeoffs:} While DN with unit randomization already outperforms DM, the question remains whether good experimental {\it design} can further improve DN's performance in parameter regimes such as this one, where the bias guarantee provided by \cref{th:general-bias} is relatively weak. We demonstrate here that this is indeed the case: as analyzed in Section~\ref{sec:dn-cluster}, for both DM and DN estimators, reducing the number of clusters from $N$ (where each node forms its own cluster) decreases bias while increasing variance, introducing an additional dimension to the bias-variance tradeoff. As shown in the right plot of Figure~\ref{fig:sw_bias}, this tradeoff produces a characteristic curve with respect to cluster sizes. Notably, at cluster level 3 (resolution $0.1$), DN achieves the lowest RMSE among all DN and DM clustering schemes. 
\end{enumerate}

\begin{figure}[h]
    \centering
    \scalebox{0.88}{
    \begin{minipage}[b]{0.5\textwidth}
        \centering
        \includegraphics[width=\textwidth]{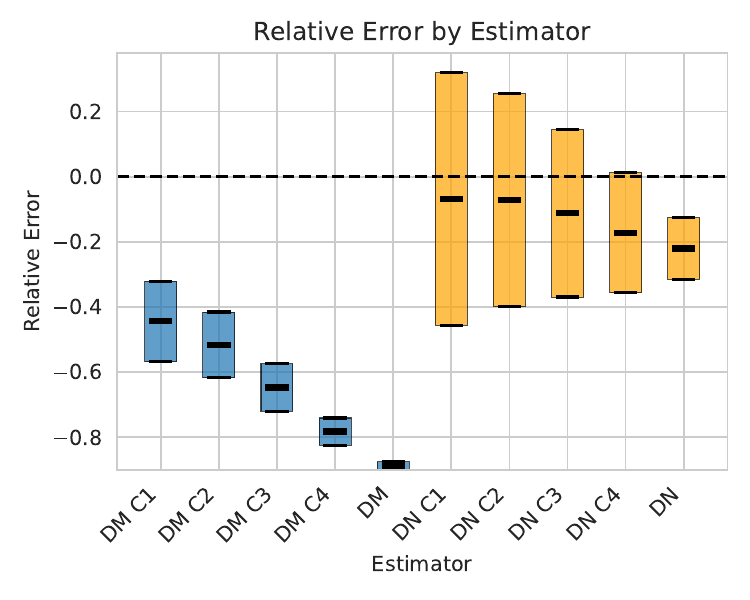}
        \label{fig:er}
    \end{minipage}
    \hspace{0.5cm}
    \begin{minipage}[b]{0.45\textwidth}
        \centering
        \includegraphics[width=\textwidth]{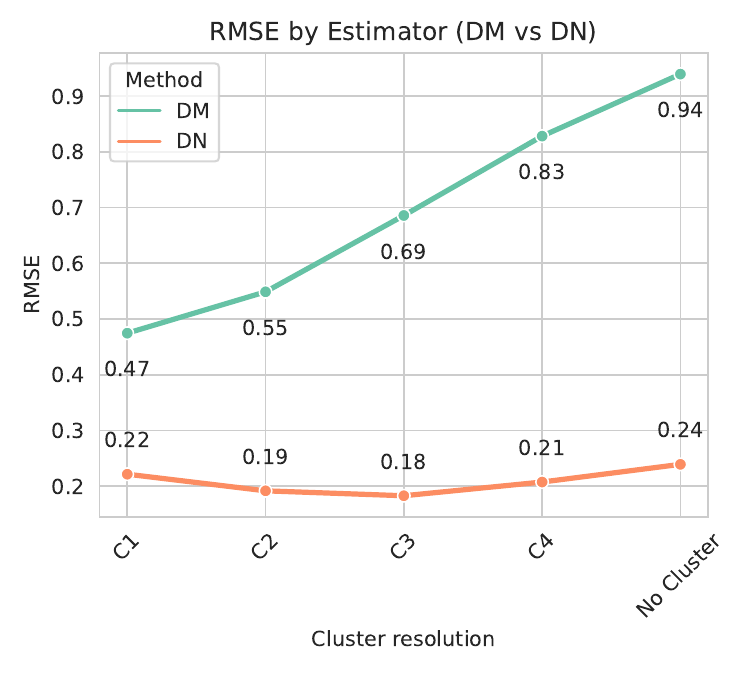}
        \label{fig:sw}
    \end{minipage}
    }
    \vspace{-0.5cm}
    \caption{Small World random graph with $N = 15000$. We set $c_2 = 0.2$ as high order interference and the ground truth ATE is $1.06$. C1, C2, C3, C4 corresponds to resolution $0.3, 0.5,0.7,0.9$, with number of clusters roughly around $1/2\cdot 10^2$, $1/2\cdot 10^2$,  $1.5\cdot10^3$, $2.5\cdot 10^3$. \textit{Left:} Relative error with 95\% confidence interval by estimator. \textit{Right:} RMSE by estimator. DN dominates DM for all clustering schemes, and the optimal bias-variance tradeoff is achieved at cluster 3, with resolution $0.1$. This cluster achieves roughly 25\% increase in RMSE from unit DN. }
    \label{fig:sw_bias}
\end{figure}

\subsection{Ridesharing Simulation}

\begin{figure}[h]
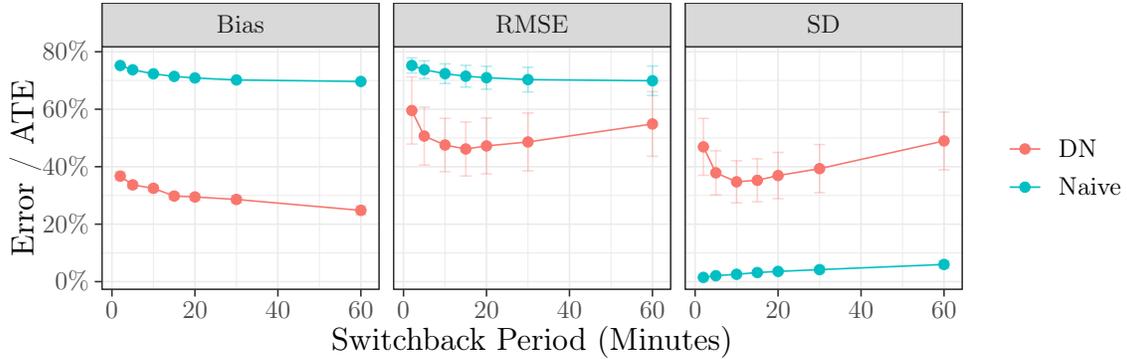

  \centering%
  \include{ridesharing}%
  \vspace{-1cm}%
  \caption{Relative error, standard deviation, and RMSE of DN and DM estimators, applied to estimating the treatment effect of a pricing algorithm change in a ride-sharing simulator, as a function of the temporal cluster size (i.e., switchback duration). DN achieves lower RMSE as compared with DM for every cluster size, with an optimal cluster size of around 15 minutes.}%
  \label{fig:ridesharing}
\end{figure}

In this section, we demonstrate the performance of the DN estimator with clustering in a real-world setting, in which there is no closed-form outcome function and the assumptions made to bound the bias of DN are not necessarily satisfied a priori.

To this end, we simulate a pricing experiment in a real-world ride-sharing simulator.\footnote{We have made the simulation code available at \url{https://github.com/atzheng/or-gymnax}} The system observes a series of ``eyeballs'' (i.e., potential ride requests), in which a rider inputs their pickup and dropoff locations, and the rider is presented with a posted price for their trip, as well as an estimated time to pickup (ETA). Based on this information, the rider can choose to accept the trip, in which case the nearest car is dispatched and the system receives the price of the trip as a reward; or they can reject it, in which case no dispatch is made and the reward is zero. The rider's acceptance decision is based on a simple logistic choice model:

\begin{equation*}
P({\rm Accept}_{i}) = \frac{1}{1 + \exp(- (\beta_{0} + \beta_{\rm price} \cdot {\rm Price} + \beta_{\rm ETA} \cdot {\rm ETA}))}
\end{equation*}
where $\beta_{\rm ETA}, \beta_{\rm price} < 0$.

For simplicity, the price offered to the rider is simply a price per unit time, multiplied by the time required to drive the rider from their pickup location to their destination. The intervention in the experiment is to increase this price, which has the direct effect of changing the expected revenue for the trip, as well as a downstream interference effect of modifying supply availability. In particular, the naive DM estimator systematically underestimates the benefits of a price increase: while increasing the price myopically reduces the probability of acceptance, this increases future supply availability and therfore acceptance probabilities at future times.

As is common in online platforms, here we employ a spatio-temporally clustered experiment design: all eyeballs occuring within the same time period and geographic region recieve prices according to the same pricing policy, either treatment or control. Spatial clusters are taken to be the taxi zones defined by New York City's Taxi and Limousine Commission (TLC), which partitions Manhattan into roughly 60 zones. We vary the size of temporal clusters from two minutes to one hour. Anecdotally, switchback experiments in service platforms use switchback durations at the upper end of this range \cite{SwitchbackTestsRandomized, bojinovDesignAnalysisSwitchback2023}.

Eyeballs (request time, pickup location, and dropoff location) are taken from TLC data\footnote{https://www.nyc.gov/site/tlc/about/tlc-trip-record-data.page}, restricted to Manhattan. Cars are routed along the shortest path according to the street grid provided by OpenStreetMaps. At the beginning of the period, car locations are randomly initialized to locations on the street grid. We run our experiment on 500,000 eyeballs, or about one week of data.

In order to implement DN estimation, we need to assume a particular interference network. In reality, all eyeballs have some non-zero causal impact on all other eyeballs; however, we will assume for the purposes of estimation that the interference effect of eyeballs separated by time and distance above a certain threshold is neglible. In practice, this threshold constitutes a hyperparameter which provides another lever for bias variance tradeoffs. For this experiment, we find that a time threshold of 10 minutes and a distance threshold of 2 km works well.

We present the results from this experiment in Figure~\ref{fig:ridesharing}. The conclusions here are largely similar to our synthetic instances. The bias of the Naive DM estimator decreases monotonically as we increase the size of the temporal clusters (i.e., the switchback duration). However, the bias stops decreasing at a certain threshold, remaining above 70\% for all cluster sizes, which we attribute to unaccounted-for spatial interference. The error of HT is sufficiently large that we do not include it in the figure, for scale. The DN estimator outperforms both alternatives at all cluster sizes, primarily by sharply reducing bias despite increased variance. Here we find that DN has a relatively small optimal cluster size of about 15 minutes as opposed to one hour for DM. With this clustering scheme, DN reduces bias relative to DM by more than 50\%, while reducing RMSE by about 35\%.

%

~\newline
\noindent\textbf{\large{Acknowledgments}} \hspace{0.5em} We thank Ramesh Johari for his valuable comments.

\newpage
\bibliographystyle{abbrvnat}

\setlength{\bibsep}{.7em}
\bibliography{reference}

\appendix
\section{Proofs}
\subsection{Proof of Theorem~\ref{dn-variance}}
\label{dn-variance-proof}
\begin{proof}
Denote the propensity as $\eta$ and the propensity score correction as $\xi$, i.e.
\[\eta_{i} = \frac{1(z_i=1)}{p} - \frac{1(z_i=0)}{1-p}\qquad \xi_{i} = \frac{1(z_i=1)(1-p)}{p} + \frac{1(z_i=0)p}{1-p}\]
Simple calculation gives us:
\begin{equation}
\label{basic-property-variance}
    \mathbb{E}[\eta] = 0
    \quad \mathbb{E}[|\eta|] = 2 \quad \mathbb{E}[\xi] = 1\quad \mathbb{E}(\eta^2) = \frac{1}{p(1-p)}\quad \mathbb{E}(\xi^2) = \frac{3p^2-3p+1}{p(1-p)} 
\end{equation}
where the expectation is taken over assignment $z$. For notational simplicity, we let $q = \frac{1}{p(1-p)}$. Expanding out the variance, we have
    \begin{align*}
        {\rm Var}({\rm DN}) &= {\rm Var}\left(\frac{1}{N}\sum_{i}^NY_i \left(\eta_i + \xi_i \sum_{j\in \mN_i}\eta_j\right)\right)\\
        & = \frac{1}{N^2}\sum_{i}^N\sum_{j}^N{\rm cov}\left(Y_i\left(\eta_i+\xi_i\sum_{k\in N_i}\eta_k\right),Y_j\left(\eta_j+\xi_j\sum_{l\in N_j}\eta_l\right)\right)
    \end{align*}
    We partition the set of $i,j$ into the following cases:
    \begin{enumerate}
        \item $i = j$
        \begin{align*}
            &{\rm cov}\left(Y_i\eta_i+Y_i\xi_i\sum_{k\in \mN_i}\eta_k, Y_i\eta_i+Y_i\xi_i \sum_{l\in \mN_i}\eta_l\right)\\
            &= {\rm cov}\left(Y_i\eta_i,Y_i\eta_i\right) + 2{\rm cov}\left(Y_i\eta_i,Y_i\xi_i\sum_{k\in \mN_i}\eta_k\right)+ {\rm cov}\left(Y_i\xi_i\sum_{k\in \mN_i}\eta_k, Y_i\xi_i \sum_{l\in\mN_i}\eta_l\right)\\
            &\leq Y_{\max}^2 \mathbb{E}[\eta_i^2] + 2\sum_{k\in \mN_i}\left(\mathbb{E}[Y_i^2\xi_i\eta_i\eta_k] - \mathbb{E}[Y_i\eta_i]\mathbb{E}[Y_i\xi_i\eta_k]\right) \\
            &\hspace{1cm}+ \sum_{k\in \mN_i, l\in \mN_i, k\neq l}\left(\mathbb{E}[Y_i^2\xi_i^2\eta_k\eta_l]- \mathbb{E}[Y_i\xi_i\eta_k]\mathbb{E}[Y_i\xi_i\eta_l]\right) + \sum_{k\in \mN_i}\left(\mathbb{E}[Y_i^2\xi_i^2\eta_k^2] - \mathbb{E}[Y_i\xi_i\eta_k]^2\right)\\
            &= \frac{Y_{\max}^2}{p(1-p)} + 2Y_{\max}^2\sum_{k\in \mN_i}\left(\mathbb{E}[\xi_i|\eta_i||\eta_k|] + \mathbb{E}[|\eta_i|]\mathbb{E}[\xi_i|\eta_k|]\right) \\
            &\hspace{1cm} + Y_{\max}^2 \sum_{k,j\in \mN_i, k\neq j}\left(\mathbb{E}[\xi_i^2|\eta_k||\eta_l|] + \mathbb{E}[\xi_i |\eta_k|]\mathbb{E}[\xi_i|\eta_l|]\right) +Y_{\max}^2\sum_{k\in \mN_i}\left(\mathbb{E}[\xi_i^2\eta_k^2] \right)\\
            &\leq \frac{Y_{\max}^2}{p(1-p)} + 2Y_{\max}^2\sum_{k\in \mN_i}\left(\frac{2p^2-2p+1}{p(1-p)}\cdot2 + 2\cdot 1\cdot 2\right) \\
            &\hspace{1cm} + Y_{\max}^2 \sum_{k,j\in \mN_i, k\neq j}\left(\frac{3p^2-3p+1}{p(1-p)} \cdot2\cdot2+ 2\cdot2\right) + Y_{\max}^2\sum_{k\in \mN_i}\left(\frac{3p^2-3p+1}{p(1-p)}\cdot\frac{1}{p(1-p)} \right)
        \end{align*}
        where the first equality is due to the bilinarity of covariance, the last inequality is by plugging in Equation~\ref{basic-property-variance}, and the rest is by plugging in covariance definition and taking out the $Y$ terms.
    
        Therefore we have in the case of $i=j$,
        \begin{align*}
            &{\rm cov}({\rm DN}_{ij}) \leq Y_{\max}^2 \Bigl(\frac{1}{p(1-p)} + d_i\cdot \frac{4}{p(1-p)} \\
             &\hspace{3cm}+(d_i^2-d_i)\cdot(\frac{4}{p(1-p)}-8) + d_i\frac{1}{p(1-p)}(\frac{1}{p(1-p)}-3)\Bigr)\\
            &= Y_{\max}^2 \left(q+ 5d_iq +4d_i^2q -8d_i^2+8d_i +d_iq^2\right)
        \end{align*}
        
        \item $i\neq j$ and $i$ and $j$ are neighbors.
        \begin{align*}
            &{\rm cov}\left(Y_i\eta_i+Y_i\xi_i\sum_{k\in \mN_i}\eta_k, Y_j\eta_j+Y_j\xi_j \sum_{l\in \mN_j}\eta_l\right)\\
            &= {\rm cov}\left(Y_i\eta_i,Y_j\eta_j\right) + {\rm cov}\left(Y_i\eta_i,Y_j\xi_j \sum_{l\in \mN_j}\eta_l\right) \\&\hspace{2.5cm}+ {\rm cov}\left(Y_i\xi_i\sum_{k\in \mN_i}\eta_k, Y_j\eta_j\right) + {\rm cov}\left(Y_i\xi_i\sum_{k\in \mN_i}\eta_k, Y_j\xi_j \sum_{l\in \mN_j}\eta_l\right)
        \end{align*}
        \begin{enumerate}
            \item First term:
            \begin{align*}
                &{\rm cov}\left(Y_i\eta_i,Y_j\eta_j\right) \leq Y_{\max}^2 \mathbb{E}[|\eta_i||\eta_j|] + Y_{\max}^2 \mathbb{E}[|\eta_i|]\mathbb{E}[|\eta_j|] = Y_{\max}^2 \cdot 8
            \end{align*}
            \item Second term:
            \begin{align*}
                {\rm cov}\left(Y_i\eta_i,Y_j\xi_j \sum_{l\in \mN_j}\eta_l\right) &= \sum_{l\in \mN_j, l\neq i}{\rm cov}\left(Y_i\eta_i,Y_j\xi_j\eta_l\right) + {\rm cov}\left(Y_i\eta_i,Y_j\xi_j\eta_i\right)\\
                &\leq 8Y_{\max}^2 (d_j-1) + Y_{\max}^2\cdot (\frac{1}{p(1-p)} + 4)\\
                &= Y_{\max}^2\left(8d_j -4+ q\right)
            \end{align*}
            \item Third term:
            \begin{align*}
                {\rm cov}\left(Y_i\xi_i\sum_{k\in \mN_i}\eta_k, Y_j\eta_j\right) \leq Y_{\max}^2\left(8d_i -4+ q\right)
            \end{align*}
            \item Fourth term:
            \begin{align*}
                {\rm cov}&\left(Y_i\xi_i\sum_{k\in \mN_i}\eta_k, Y_j\xi_j \sum_{l\in \mN_j}\eta_l\right) \\
                &= \sum_{k\in \mN_i, l\in \mN_j, k\neq j, l\neq i}{\rm cov}\left(Y_i\xi_i\eta_k, Y_j\xi_j\eta_l\right) + \sum_{k = j, l\neq i}{\rm cov}\left(Y_i\xi_i\eta_k, Y_j\xi_j\eta_l\right)\\
                &\hspace{1cm}+ \sum_{k\neq j, l = i}{\rm cov}\left(Y_i\xi_i\eta_k, Y_j\xi_j\eta_l\right)+ {\rm cov}\left(Y_i\xi_i\eta_j, Y_j\xi_j\eta_i\right)\\
                &\leq \sum_{k\neq j, l\neq i} Y_{\max}^2\left(\E[|\xi_i\xi_j\eta_k\eta_l|] +\E[|\xi_i\eta_k|]\E[|\xi_j\eta_l|]\right)\\
                &\hspace{1cm}+ \sum_{k = j, l\neq i} Y_{\max}^2\left(\E[|\xi_i\xi_j\eta_j\eta_l|] +\E[|\xi_i\eta_j|]\E[|\xi_j\eta_l|]\right)\\
                &\hspace{1cm}+ \sum_{k\neq j, l = i} Y_{\max}^2\left(\E[|\xi_i\xi_j\eta_k\eta_i|] +\E[|\xi_i\eta_k|]\E[|\xi_j\eta_i|]\right)\\
                &\hspace{1cm}+ \sum_{k = j, l = i} Y_{\max}^2\left(\E[|\xi_i\xi_j\eta_j\eta_i|] +\E[|\xi_i\eta_j|]\E[|\xi_j\eta_i|]\right)\\
                &= Y_{\max}^2\left(8(d_i-1)(d_j-1) + (d_j-1)(2\cdot \frac{2p^2-2p+1}{p(1-p)}+ 4)\right) \\
                &\hspace{1cm} + Y_{\max}^2\left((d_i-1)(2\cdot \frac{2p^2-2p+1}{p(1-p)}+ 4) + (\frac{2p^2-2p+1}{p(1-p)})^2 +4\right)
            \end{align*}
        \end{enumerate}
        Collecting all the terms, we have for $i\neq j$ and $i$ and $j$ are neighbors,
        \begin{align*}
            &{\rm cov}({\rm DN}_{ij}) \leq Y_{\max}^2\left(8d_id_j + 2qd_i+2qd_j +q^2-6q+16\right)
        \end{align*}
        \item $i\neq j$ and they are not neighbors, but $i$ and $j$ share common neighbors.
        \begin{align*}
            &{\rm cov}\left(Y_i\eta_i+Y_i\xi_i\sum_{k\in \mN_i}\eta_k, Y_j\eta_j+Y_j\xi_j \sum_{l\in \mN_j}\eta_l\right)\\
            &= {\rm cov}\left(Y_i\eta_i,Y_j\eta_j\right) + {\rm cov}\left(Y_i\eta_i,Y_j\xi_j \sum_{l\in \mN_j}\eta_l\right) \\
            &\hspace{2.5cm}+ {\rm cov}\left(Y_i\xi_i\sum_{k\in \mN_i}\eta_k, Y_j\eta_j\right) + {\rm cov}\left(Y_i\xi_i\sum_{k\in \mN_i}\eta_k, Y_j\xi_j \sum_{l\in \mN_j}\eta_l\right)\\
            &\leq Y_{\max}^2\left(8d_i+8d_j +8\right) + \sum_{k\neq l}{\rm cov}\left(Y_i\xi_i\eta_k, Y_j\xi_j \eta_l\right) + \sum_{k =  l}{\rm cov}\left(Y_i\xi_i\eta_k, Y_j\xi_j \eta_k\right)\\
            &\leq Y_{\max}^2\left(8d_i+8d_j +8 \right) + Y_{\max}^2\sum_{k\neq l} 8 + Y_{\max}^2 \sum_{k = l}( q +4)\\
            &\leq Y_{\max}^2\left(8d_i+8d_j +8 \right) + Y_{\max}^2 8 d_id_j  + Y_{\max}^2\text{Min}(d_i,d_j)( q +4)
        \end{align*}
        \item $i$ and $j$ do not share common neighbors, in this case the covariance is zero.
    \end{enumerate}
    Summing up all the cases, we have
    \begin{align*}
        {\rm Var}({\rm DN}) &\leq \frac{Y_{\max}^2}{N^2} \sum_{i=1}^N \Bigl(q + 5d_iq +4d_i^2q -8d_i^2+8d_i +d_iq^2 \\
        &\hspace{3cm}+\sum_{j\in \mN_i}\left(8d_id_j + 2qd_i+2qd_j +q^2-6q+16\right) \\
        &\hspace{3cm}+\sum_{j\in M_i\cap\mN_i}(8d_id_j + 8d_i+8d_j+8 + (q+4)Min(d_i,d_j))\Bigr)
    \end{align*}
    Therefore, if we let $d$ denote the upper bound to all the node degrees, we have that
    \begin{align*}
        {\rm Var}(DN) \leq \frac{Y_{\max}^2}{N}\left(8d^4 + qd^3+ 20d^3 + 7qd^2 -20d^2 +q^2d +16d+q\right)
    \end{align*}
    Hence on the order of $O\left(\frac{Y_{\max}^2}{N}\cdot \left(d^4+\frac{d^3}{p(1-p)}+ \frac{d}{p^2(1-p)^2}\right)\right)$
\end{proof}

\subsection{Proof of Theorem~\ref{thm:dn-cluster-bias}}
Define $\mN_i^C$ as the cluster neighbors of node $i$, and let $C_i$ denote the cluster that contains $i$ itself. Note that here we adopt the definition of $\mN_i^C$ such that $C_i\notin \mN_i^C$. We first prove the following lemma that relates cluster level second order finite difference to unit level second order finite difference.  
\begin{lemma}
    Let $C_j,C_k$ denote two neighbor clusters of $i$, and $C_j,C_k \neq C_i$. Suppose that the two clusters  contain $m,n$ number of neighbors of $i$ respectively, i.e. $|C_j\cap \mN_i| = m, |C_k \cap \mN_i| = n$, then
    \[f^a_i(z^{1_m, 1_n}) - f_i^a(z^{1_m,0_n}) - f_i^a(z^{0_m,1_n}) + f_i^a(z^{0_m,0_n}) \leq mn\cdot L \delta^2\]
    for $a\in\{0,1\}$ and any $z\in \{0,1\}^{\mN_i}$, where $1_m, 0 _m$ and $1_n,0_n$ denote the $1$ and $0$ vectors of dimensions $m$ and $n$.
\end{lemma}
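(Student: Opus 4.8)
The plan is to express the cluster-level second-order difference as a double sum of unit-level second-order differences, and then apply the unit-level bound from Equation~\ref{assum:second-order-finite-difference} termwise. First I would fix notation: enumerate the $m$ neighbors of $i$ lying in $C_j$ as $j_1,\dots,j_m$ and the $n$ neighbors in $C_k$ as $k_1,\dots,k_n$, and regard $f^a_i$ as a function of these $m+n$ coordinates with all remaining coordinates frozen at the values specified by $z$. Writing $1_m,0_m$ and $1_n,0_n$ for the all-ones and all-zeros assignments on the $C_j$-block and the $C_k$-block respectively, the quantity to bound is precisely the mixed difference obtained by toggling each \emph{entire} block between its $0$ and $1$ states.

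The key step is a double telescoping argument. I would first telescope over the $C_j$-block: define partial assignments $a_p=(1^{p},0^{m-p})$ and $b_p=(1^{p-1},0^{m-p+1})$ on the $C_j$ coordinates, which differ only in coordinate $p$ and satisfy $b_{p+1}=a_p$, $b_1=0_m$, $a_m=1_m$, so that the block toggle telescopes as a sum over $p$. I would then telescope over the $C_k$-block in exactly the same way, producing partial assignments $c_q,d_q$ for $q=1,\dots,n$. Carrying out both telescopings simultaneously rewrites the cluster-level difference as
\[
\sum_{p=1}^m \sum_{q=1}^n \Delta_{j_p k_q} f^a_i(\tilde z_{p,q}),
\]
where $\tilde z_{p,q}$ has the $C_j$-block set by $a_p/b_p$, the $C_k$-block set by $c_q/d_q$, and the frozen coordinates as in $z$; crucially, since $a_p,b_p$ differ only in coordinate $j_p$ and $c_q,d_q$ differ only in coordinate $k_q$, each summand collapses to exactly a unit-level second-order finite difference in the pair $j_p,k_q$.

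Finally, I would invoke Equation~\ref{assum:second-order-finite-difference}, which under \cref{ass:smooth} bounds every unit-level second-order difference by $|\Delta_{j_p k_q} f^a_i(\tilde z_{p,q})|\le L\delta^2$, valid for any base configuration, and where $j_p\neq k_q$ holds automatically because $C_j\neq C_k$ lie in distinct clusters. Summing the $mn$ terms and applying the triangle inequality then yields the claimed bound $mn\,L\delta^2$, and the identical argument applies for either $a\in\{0,1\}$.

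The step I expect to be the main obstacle is the book-keeping in the double telescoping: one must verify that after both telescopings each of the $mn$ terms reduces \emph{exactly} to a $\Delta_{j_p k_q}$ evaluated at a legitimate treatment vector, so that Equation~\ref{assum:second-order-finite-difference} applies verbatim, and that the intermediate partially toggled block assignments cancel correctly with no surviving cross terms. This is precisely the ``connection between the unit-level finite-difference assumption and the cluster-level bound'' flagged in the proof sketch; once the decomposition is established, the termwise bound and summation are immediate.
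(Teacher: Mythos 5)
Your proof is correct and is essentially the paper's own argument: the paper proves the same bound by joint induction on $(m,n)$, peeling one coordinate off a block at each step, which is precisely your double telescoping written in recursive form, with each of the $mn$ resulting terms being a unit-level difference $\Delta_{j_p k_q} f^a_i$ at a legitimate hybrid assignment, bounded by $L\delta^2$ via Equation~\ref{assum:second-order-finite-difference} (and $j_p \neq k_q$ holding since the clusters are disjoint). The only cosmetic difference is that your closed-form telescoping identity exhibits the $mn$ count in one shot, whereas the paper accumulates it through the induction hypothesis.
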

\begin{proof}
    For notational simplicity, we drop the $z$ base and directly write out the treatment vector of $m$ and $n$. We prove the statement by induction on $m,n$ jointly. In the case of $m = 1, n=1$, the statement is clearly true by definition. We now assume that for any $ m\leq M, n\leq N$, the statement holds true. We first prove the induction step on $N+1$: that the statement holds true for any $m \leq M, n = N+1$. For any $m = 1,\dots, M$ we have 
    \begin{align*}
        &f_i^a(1_m,1_N,1) - f_i^a(1_m, 0_N, 0) - f_i^a(0_m, 1_N,1) + f_i^a(0_m, 0_N, 0)\\
        &\quad  = \left(f_i^a(1_m, 1_N,1) - f_i^a(1_m, 0_N,1) - f_i^a(0_m, 1_N,1) + f_i^a(0_m,0_N, 1)\right) \\
        &\quad \quad + \left(f_i^a(1_m,0_N,1) - f_i^a (1_m,0_N,0) - f_i^a(0_m,0_N,1) + f_i^a(0_m,0_N,0)\right)\\
        &\quad \leq m\cdot N\cdot L\delta^2 + m\cdot 1\cdot L \delta^2\\
        &\quad = m(N+1)\cdot L \delta^2
    \end{align*}
    where the inequality is due to induction step $m\cdot N$ and $m\cdot 1$ respectively. Now we can complete the induction step by assuming the statement holds true for any $m\leq M, n\leq N+1$ and induct on $m = M+1$, $n = 1,\dots, N+1$. This case follows from the exact same proof as above. Hence we have that the statement holds true for any $m\leq M+1,n\leq N+1$. The induction is complete.
\end{proof}
We can then follow the exact same proof as Theorem~\ref{dn-bias} with the cluster level function $f_i^C: z^C\to \mathbb{R}$. To recap, we let \(f_i(z^{C})\) represent the outcome for node $i$ given a vector of cluster-level treatment assignments \(z^{C} \in \{0, 1\}^{|\mathcal{C}|}\). Define \(F_i^{1}(w) = {\sf E}[f_i^C(z^{C}) | z_{C_i}=1]\) and \(F_i^{0}(w) = {\sf E}[f_i^C(z^{C}) | z_{C_i}=0]\), where the expectation is taken over \(z_{C_{j}} \sim {\rm Bern}(w_j)\) for \(j \neq i\). We taylor expand \(F_i^{1}(\vec{1})\) and \(F_i^{0}(\vec{0})\) around \(F_i^{1}(\vec{p})\) and \(F_i^{0}(\vec{p})\) to the first order and obtain
\begin{align*}
  {\rm ATE} &= F_i^{1}(1) - F_i^{0}(0) \\
  &\approx
  F_i^1({p}) + \sum_{C_j\in \mN_i^C} (1-p)\left({\sf E}[f_i^{C,1}(Z) |  Z_{C_j} = 1] - {\sf E}[f_i^{C,1}(Z) | Z_{C_j}= 0]\right) \\
&\quad  - F_i^0({p}) - \sum_{C_j\in \mN_i^C} (0-p)\left({\sf E}[f_i^{C,0}(Z) | Z_{C_j} = 1] - {\sf E}[f_i^{C,0}(Z) | Z_{C_j} = 0]\right)
\end{align*}
Our estimator Eq.~\ref{eq:DN-clustering} precisely implements these quantities, as analogous to the unit level estimator. It remains to bound the second order error, given as follows.
\begin{align*}
    &|(a-p)^T H_C^a(p')(a-p)| \\
    &= \sum_{C_j,C_k\in \mN^C_i: C_j\neq C_k} (a-p)^2\Bigl(\E[f_i^{C,a}(Z)\mid Z_{C_j} = 1, Z_{C_k} = 1] - \E[f_i^{C,a}(Z)\mid Z_{C_j} = 1, Z_{C_k} = 0]\\
    &\hspace{3cm} - \E[f_i^{C,a}(Z)\mid Z_{C_j} = 0, Z_{C_k} = 1] + \E[f_i^{C,a}(Z)\mid Z_{C_j} = 0, Z_{C_k} = 0]\Bigr)\\
    &= \sum_{C_j,C_k\in \mN^C_i: C_j\neq C_k}(a-p)^2 \E[f^{C,a}_i(Z^{1_{C_j}, 1_{C_k}}) - f_i^{C,a}(Z^{1_{C_j},0_{C_k}}) - f_i^{C,a}(Z^{0_{C_j},1_{C_k}}) + f_i^{C,a}(Z^{0_{C_j},0_{C_k}})]\\
    &=\sum_{\substack{C_j,C_k\in \mN^C_i: C_j\neq C_k\\ |C_j\cap \mN_i| = m, |C_k\cap \mN_i| = n}}(a-p)^2 \E[f^{a}_i(Z^{1_{m}, 1_{n}}) - f_i^{a}(Z^{1_{m},0_{n}}) - f_i^{a}(Z^{0_m,1_{n}}) + f_i^{a}(Z^{0_{m},0_{n}})\mid Z_l = a,\forall l\in C_i]\\
    &\leq \sum_{\substack{C_j,C_k\in \mN^C_i: C_j\neq C_k\\ |C_j\cap \mN_i| = m, |C_k\cap \mN_i| = n}}mn\cdot L\delta^2
\end{align*}
where the second equality is transferring the cluster level $f_i^C$ back to unit level $f_i$ with the conditioning of $z_{C_i} = a$, and the last inequality is due to the previous lemma. Note that here we denote $\mN_i^C$ to be the cluster neighbors of node $i$ without $C_i$, the cluster that $i$ itself resides in. Note that we have $\sum_{C_j\in \mN_i^C} |C_j\cap \mN_i| = d_i - d_i^C$ by definition, where again $d_i^C$ denotes the number of neighbors of node $i$ in cluster $C_i$. Clearly 
$$\sum_{\substack{C_j,C_k\in \mN^C_i: C_j\neq C_k\\ |C_j\cap \mN_i| = m, |C_k\cap \mN_i| = n}}mn\cdot L\delta^2 \leq (\sum_{C_j\in \mN_i^C} |C_j\cap \mN_i|)^2\cdot L\delta^2 = (d_i-d_i^C)^2L\delta^2$$
This completes the proof.

\end{document}